\documentclass[11pt]{amsart}

\usepackage{amssymb,latexsym}
\usepackage{graphicx,epsfig,color}

\textwidth15.5cm \textheight22.5cm \evensidemargin.2cm
\oddsidemargin.2cm
\parskip0.4ex

\addtolength{\headheight}{3.2pt}

\newtheorem{thm}{Theorem}[section]

\newtheorem{cor}[thm]{Corollary}
\newtheorem{lem}[thm]{Lemma}
\newtheorem{rem}[thm]{Remark}

\newtheorem{defn}[thm]{Definition}

\numberwithin{equation}{section}

\date{}

\begin{document}

\author[Mukhiddin I. Muminov, Tulkin H. Rasulov, Nargiza A. Tosheva]{Mukhiddin I. Muminov$^1$, Tulkin H. Rasulov$^2$, Nargiza A. Tosheva$^2$}
\title[Analysis of the discrete spectrum of the family of $3 \times 3$ operator
matrices]{Analysis of the discrete spectrum of the family of\\ $3 \times 3$
operator matrices} \maketitle

\begin{center}
{\small $^1$Faculty of Scinces, Universiti Teknologi Malaysia (UTM)\\
 81310 Skudai, Johor Bahru, Malaysia\\
 E-mail: mmuminov@mail.ru

$^2$Faculty of Physics and Mathematics, Bukhara State University\\
M. Ikbol str. 11, 200100 Bukhara, Uzbekistan\\
E-mail: rth@mail.ru, nargiza$_{-}$n@mail.ru}
\end{center}

\begin{abstract}
We consider the family of $3 \times 3$ operator matrices ${\bf
H}(K),$ $K \in {\Bbb T}^3:=(-\pi; \pi]^3$ associated with the
lattice systems describing two identical bosons and one particle,
another nature in interactions, without conservation of the number
of particles. We find a finite set $\Lambda \subset {\Bbb T}^3$ to
prove the existence of infinitely many eigenvalues of ${\bf H}(K)$
for all $K \in \Lambda$ when the associated Friedrichs model has a
zero energy resonance. It is found that for every $K \in \Lambda,$
the number $N(K, z)$ of eigenvalues of ${\bf H}(K)$ lying on the
left of $z,$ $z<0,$ satisfies the asymptotic relation
$\lim\limits_{z \to -0} N(K, z) |\log|z||^{-1}={\mathcal U}_0$
with $0<{\mathcal U}_0<\infty,$ independently on the cardinality
of $\Lambda.$ Moreover, we prove that for any $K \in \Lambda$ the
operator ${\bf H}(K)$ has a finite number of negative eigenvalues
if the associated Friedrichs model has a zero eigenvalue or a zero
is the regular type point for positive definite Friedrichs model.
\end{abstract}

\medskip {AMS subject Classifications:} Primary 81Q10; Secondary
35P20, 47N50.

\textbf{Key words and phrases:} operator matrix, bosonic Fock
space, annihilation and creation operators, Friedrichs model,
Birman-Schwinger principle, zero energy resonance, the Efimov
effect, discrete spectrum asymptotics.

\section{Introduction}

The main objective of the present paper is to establish the
finiteness or infiniteness of the number of eigenvalues for a
family of $3 \times 3$ operator matrices ${\bf H}(K),$ $K \in
{\Bbb T}^3:=(-\pi; \pi]^3$ and especially the asymptotics for the
number of infinitely many eigenvalues (Efimov's effect case).
These operator matrices are associated with the lattice systems
describing two identical bosons and one particle, another nature
in interactions, without conservation of the number of particles.

The Efimov effect is one of the most remarkable results in the
spectral analysis for continuous three-particle Schr\"{o}dinger
operators: if none of the three two-particle Schr\"{o}dinger
operators (corresponding to the two-particle subsystems) has
negative eigenvalues but at least two of them have zero energy
resonance, then the three-particle Schr\"{o}dinger operator has
infinitely many negative eigenvalues accumulating at zero.

For the first time the Efimov effect has been discussed in
\cite{Efim}. Then this problem has been studied on a physical
level of rigor in \cite{AHW, Amad-Nob}. A rigorous mathematical
proof of the existence of Efimov's effect was originally carried
out in \cite{Yaf} and then many works devoted to this subject, see
for example \cite{Dell-Fig-Teta,Ovch-Sig,Sob,Tam-1,Tam-2}. The
main result obtained by Sobolev \cite{Sob} (see also \cite{Tam-2})
is an asymptotics of the form ${\mathcal U}_0 |\log|z||$ for the
number $N(z)$ of eigenvalues on the left of $z,$ $z<0,$ where the
coefficient ${\mathcal U}_0$ does not depend on the two-particle
potentials $v_\alpha$ and is a positive function of the ratios
$m_1/m_2$ and $m_2/m_3$ of the masses of the three particles.

In a system of three-particles on three-dimensional lattices, due
to the fact that the discrete analogue of the Laplacian or its
generalizations are not rationally invariant, the Hamiltonian of a
system does not separate into two parts, one relating to the
center-of-mass motion and the other one to the internal degrees of
freedom. In particular, in this case the Efimov effect exists only
for the zero value of the three-particle quasi-momentum $K \in
{\Bbb T}^3$ (see \cite{AL, ALM, LM}). An asymptotics analogous to
\cite{Sob,Tam-2} was obtained in \cite{AL, ALM} for the number of
eigenvalues.

In all above mentioned papers devoted to the Efimov effect, the
systems where the number of quasi-particles is fixed have been
considered. In the theory of solid-state physics \cite{Mog},
quantum field theory \cite{Frid}, statistical physics
\cite{Mal-Min, MS}, fluid mechanics \cite{Cha61},
magnetohydrodynamics \cite{Lif89} and quantum mechanics
\cite{Tha92} some important problems arise where the number of
quasi-particles is finite, but not fixed. In \cite{SSZ} geometric
and commutator techniques have been developed in order to find the
location of the spectrum and to prove absence of singular
continuous spectrum for Hamiltonians without conservation of the
particle number.

In the present paper we consider the family of $3 \times 3$
operator matrices ${\bf H}(K),$ $K \in {\Bbb T}^3$ associated with
the lattice systems describing two identical bosons and one
particle, another nature in interactions, without conservation of
the number of particles. This operator acts in the direct sum of
zero-, one- and two-particle subspaces of the bosonic Fock space
and it plays a key role for the study of the energy operator of 
the spin-boson Hamiltonian with two bosons on the torus \cite{MNR2015, Ras2016}.
We discuss the case where the dispersion function has
form $\varepsilon(p)=\sum\limits_{i=1}^3 (1-\cos(n p^{(i)}))$ with
$n>1.$ We denote by $\Lambda$ the set of points ${\Bbb T}^3$ where
the function $\varepsilon(\cdot)$ takes its (global) minimum.
Under some smoothness assumptions on the parameters of a family of
Friedrichs models ${\bf h}(k),$ $k \in {\Bbb T}^3,$ we obtain the
following results:

(i) We describe the location of the essential spectrum
$\sigma_{\rm ess}({\bf H}(K))$ of ${\bf H}(K),$ $K \in {\Bbb T}^3$
via the spectrum of ${\bf h}(k),$ $k \in {\Bbb T}^3;$

(ii) We prove that for all $K \in \Lambda$ the ${\bf H}(K)$ has
infinitely many negative eigenvalues accumulating at zero, if the
operator ${\bf h}({\bf 0}),$ ${\bf 0}=(0,0,0)$ has a zero energy
resonance (Efimov's effect). Moreover, for any $K \in \Lambda$ we
establish the asymptotics $N(K; z) \sim {\mathcal U}_0 |\log|z||$
with $0<{\mathcal U}_0<\infty$ for the number $N(K; z)$ of
eigenvalues of ${\bf H}(K)$ lying on the left of $z,$ $z<\min
\sigma_{\rm ess}({\bf H}(K))=0;$

(iii) We prove the finiteness of negative eigenvalues of ${\bf
H}(K)$ for $K \in \Lambda,$ if the operator ${\bf h}({\bf 0})$ has
a zero eigenvalue or a zero is the regular type point for ${\bf
h}({\bf 0})$ with ${\bf h}({\bf 0}) \geq 0.$

We remark that for the Friedrichs model ${\bf h}({\bf 0})$ the
presence of a zero energy resonance (consequently the existence of
the Efimov effect for ${\bf H}(K),$ $K \in \Lambda$) is due to the
annihilation and creation operators.

We point out that the operator ${\bf H}(K)$ has been considered
before in \cite{ALR, ALR1, LR} for $K=0$ and $n=1,$ and 
in \cite{Ras2011} for $K=0$ with arbitrary $n,$ where proven
the existence of Efimov's effect. Similar asymptotics for the
number of eigenvalues was obtained in \cite{ALR}.
We recall that the main results (without proofs) of this paper has been
announced in \cite{MumRas}. This paper is devoted to the detailed proof of
these results with respect to the number of eigenvalues.
The result related with the essential spectrum of ${\bf H}(K)$
was discussed in \cite{RasTosh2019}.

It surprising that in the assertion (ii) the asymptotics for $N(K;
z)$ is the same for all $K \in \Lambda$ and is stable with respect
to the number $n.$ Recall that in all papers devoted to Efimov's
effect for lattice systems the existence of this effect have been
proved only for zero value of the quasi-momentum ($K=0$) and for
the case $n=1,$ or only for zero value of the quasi-momentum with 
arbitrary $n$.

The organization of the present paper is as follows. Section 1 is
an introduction to the whole work. In Section 2, the operator
matrices ${\bf H}(K),$ $K \in {\Bbb T}^3$ are described as the
family of bounded self-adjoint operators in the direct sum of
zero-, one- and two-particle subspaces of the bosonic Fock space
and the main results are formulated. In Section 3, we discuss some
results concerning threshold analysis of the Friedrichs model
${\bf h}(k),$ $k \in {\Bbb T}^3.$ In Section 4 we give a
modification of the Birman-Schwinger principle for ${\bf H}(K),$
$K \in {\Bbb T}^3.$ Section 5 we establish the finiteness of the
number of eigenvalues of the operator ${\bf H}(K),$ $K \in
\Lambda.$ In section 6 we obtain the asymptotic formula for the
number of negative eigenvalues of ${\bf H}(K),$ $K \in \Lambda.$

We adopt the following conventions throughout the present paper.
Let ${\Bbb T}^3$ be the three-dimensional torus, the cube
$(-\pi,\pi]^3$ with appropriately identified sides equipped with
its Haar measure. Denote by $\sigma(\cdot),$ $\sigma_{\rm
ess}(\cdot)$ and $\sigma_{\rm disc}(\cdot),$ respectively, the
spectrum, the essential spectrum, and the discrete spectrum of a
bounded self-adjoint operator. In what follows we deal with the
operators in various spaces of vector-valued functions. They will
be denoted by bold letters and will be written in the matrix form.

\section{Family of $3 \times 3$ operator matrices and main results}

Let ${\Bbb C}$ be the field of complex numbers, $L_2({\Bbb T}^3)$
be the Hilbert space of square integrable (complex) functions
defined on ${\Bbb T}^3$ and $ L_2^{\rm s} (({\Bbb T}^3)^2)$ be the
Hilbert space of square integrable (complex) symmetric functions
defined on $({\Bbb T}^3)^2.$ Denote by ${\mathcal H}$ the direct
sum of spaces ${\mathcal H}_1={\Bbb C},$ ${\mathcal H}_1=L_2({\Bbb
T}^3)$ and ${\mathcal H}_2=L_2^{\rm s}(({\Bbb T}^3)^2),$ that is,
${\mathcal H}={\mathcal H}_0 \oplus {\mathcal H}_1 \oplus
{\mathcal H}_2.$ The spaces ${\mathcal H}_0,$ ${\mathcal H}_1$ and
${\mathcal H}_2$ are called zero-, one- and two-particle subspaces
of a bosonic Fock space ${\mathcal F}_{\rm s}(L_2({\Bbb T}^3))$
over $L_2({\Bbb T}^3),$ respectively. It is well-known that
if ${\mathcal A}$ is a bounded linear in a Hilbert space
${\mathcal H}$ and a decomposition ${\mathcal H}={\mathcal H}_0 \oplus {\mathcal H}_1 \oplus
{\mathcal H}_2$ into three Hilbert spaces ${\mathcal H}_0,$ ${\mathcal H}_1,$
${\mathcal H}_2$ is given, then ${\mathcal A}$ always admits \cite{Jeribi, Tre08} a block
operator matrix representation
$$
{\mathcal A}=\left( \begin{array}{ccc}
A_{00} & A_{01} & A_{02}\\
A_{10} & A_{11} & A_{12}\\
A_{20} & A_{21} & A_{22}\\
\end{array}
\right)
$$
with linear operators $A_{ij}: {\mathcal H}_j \to {\mathcal H}_i,$ $i,j=0,1,2.$

Let us consider the following family of $3 \times 3$ operator
matrices ${\bf H}(K),$ $K \in {\Bbb T}^3$ acting in the Hilbert
space ${\mathcal H}$ as
$$
{\bf H}(K):=\left( \begin{array}{ccc}
H_{00}(K) & H_{01} & 0\\
H_{01}^* & H_{11}(K) & H_{12}\\
0 & H_{12}^* & H_{22}(K)\\
\end{array}
\right)
$$
with the entries
\begin{align*}
& H_{00}(K)f_0=w_0(K)f_0, \quad H_{01}f_1=\int_{{\Bbb T}^3}
v_0(t)f_1(t)dt,\\ 
& (H_{11}(K)f_1)(p)=w_1(K; p)f_1(p),\quad
(H_{12}f_2)(p)= \int_{{\Bbb T}^3} v_1(t) f_2(p,t)dt,\\
& (H_{22}(K)f_2)(p,q)=w_2(K;p,q)f_2(p,q),
\end{align*}
where $H_{ij}^*$ $(i<j)$ denotes the adjoint operator to $H_{ij}$
and $f_i \in {\mathcal H}_i,$ $i=0,1,2.$

Here $w_0(\cdot)$ and $v_i(\cdot),$ $i=0,1$ are real-valued
bounded functions on ${\Bbb T}^3,$ the functions $w_1(\cdot;
\cdot)$ and $w_2(\cdot; \cdot, \cdot)$ are defined by the
equalities
\begin{align*}
& w_1(K; p):=l_1\varepsilon(p)+l_2\varepsilon(K-p)+1;\\
& w_2(K; p,q):=l_1\varepsilon(p)+l_1\varepsilon(q)+l_2\varepsilon(K-p-q),
\end{align*}
respectively, with $l_1,l_2>0$ and
$$
\varepsilon(q):=\sum_{i=1}^3(1-\cos(n q^{(i)})), \quad
q=(q^{(1)},q^{(2)},q^{(3)}) \in {\Bbb T}^3, \quad n \in {\Bbb N}.
$$

Under these assumptions the operator ${\mathbf H}(K)$ is bounded
and self-adjoint.

We remark that the operators $H_{01}$ and $H_{12}$ resp.
$H_{01}^*$ and $H_{12}^*$ are called annihilation resp. creation
operators, respectively. In this paper we consider the case, where
the number of annihilations and creations of the particles of the
considering system is equal to 1. It means that $H_{ij}\equiv 0$
for all $|i-j|>1.$

To study the spectral properties of the operator ${\bf H}(K)$ we
introduce a family of bounded self-adjoint operators (Friedrichs
models) ${\mathbf h}(k),$ $k\in {\Bbb T}^3,$ which acts in
${\mathcal H}_0 \oplus {\mathcal H}_1$ as
$$
{\mathbf h}(k)=\left( \begin{array}{cc}
h_{00}(k) & h_{01}\\
h_{01}^* & h_{11}(k)\\
\end{array}
\right),
$$
where
\begin{align*}
& h_{00}(k)f_0=(l_2 \varepsilon(k)+1) f_0,\quad
h_{01}f_1=\frac{1}{\sqrt{2}} \int_{{\Bbb T}^3} v_1(t)
f_1(t)dt,\\
& (h_{11}(k)f_1)(q)=E_k(q)f_1(q), \quad
E_k(q):=l_1\varepsilon(q)+l_2\varepsilon(k-q).
\end{align*}

The following theorem \cite{ALR, ALR1, LRmz, RasTosh2019} describes the location of
the essential spectrum of the operator ${\bf H}(K)$ by the
spectrum of the family ${\mathbf h}(k)$ of Friedrichs models.

\begin{thm}\label{THM 2.1} For the essential spectrum of ${\bf H}(K)$
the equality
\begin{equation}\label{ess of bfH}
\sigma_{\rm ess}({\bf H}(K))=\bigcup\limits_{p\in {{\Bbb
T}^3}}\{\sigma_{\rm disc}({\mathbf h}(K-p))+l_1 \varepsilon(p)\}
\cup [m_K; M_K]
\end{equation}
holds, where the numbers $m_K$ and $M_K$ are defined by
$$
m_K:= \min\limits_{p,q\in {\Bbb T}^3} w_2(K;p,q)\quad \mbox{and}
\quad M_K:= \max\limits_{p,q\in {\Bbb T}^3} w_2(K;p,q).
$$
\end{thm}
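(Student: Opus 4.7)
The plan is to establish the two inclusions in \eqref{ess of bfH} separately, using explicit Weyl (singular) sequences for ``$\supseteq$'' and a Schur-complement reduction for ``$\subseteq$''.

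For the inclusion $\supseteq$, I first treat $\mu \in [m_K, M_K]$. By continuity of $w_2(K;\cdot,\cdot)$, fix $(p_*, q_*) \in (\mathbb{T}^3)^2$ with $w_2(K; p_*, q_*) = \mu$, and let $g_n \in L_2((\mathbb{T}^3)^2)$ be unit-norm functions supported in shrinking neighborhoods of $(p_*, q_*)$. Symmetrizing, the sequence $\psi_n = (0, 0, \tfrac{1}{\sqrt{2}}(g_n + g_n \circ \tau))$ with $\tau(p,q) = (q,p)$ is weakly null and essentially normalized. Continuity of $w_2$ yields $\|(H_{22}(K) - \mu)\psi_n^{(2)}\| \to 0$, while a Cauchy--Schwarz estimate using the shrinking supports gives $\|H_{12}\psi_n^{(2)}\| \to 0$. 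Next, for $\lambda + l_1\varepsilon(p_*)$ with $\lambda \in \sigma_{\rm disc}(\mathbf{h}(K-p_*))$, let $(\alpha, \varphi_1)$ be the associated normalized eigenvector, which satisfies
\[
(l_2 \varepsilon(K - p_*) + 1 - \lambda)\alpha = -\tfrac{1}{\sqrt{2}} \int_{\mathbb{T}^3} v_1(t) \varphi_1(t)\, dt, \qquad (E_{K - p_*}(q) - \lambda)\varphi_1(q) = -\tfrac{\alpha}{\sqrt{2}} v_1(q).
\]
Taking $\chi_n \in L_2(\mathbb{T}^3)$ of unit norm concentrated at $p_*$, I set $\psi_n = \bigl(0,\ \alpha \chi_n,\ \tfrac{1}{\sqrt{2}}(\chi_n \otimes \varphi_1 + \varphi_1 \otimes \chi_n)\bigr)$. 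A component-wise calculation of $(\mathbf{H}(K) - \lambda - l_1 \varepsilon(p_*))\psi_n$ shows that the leading contributions cancel by the two eigenvalue equations above and continuity of $\varepsilon$ at $p_*$, while the remainders are either $o(1)$ by the concentration of $\chi_n$ or tend to zero by weak convergence against bounded test integrands.

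For the inclusion $\subseteq$, assume $z$ is outside the right-hand side of \eqref{ess of bfH}; I aim to show $\mathbf{H}(K) - z$ is Fredholm. Since $z \notin [m_K, M_K]$, the multiplication operator $H_{22}(K) - z$ is boundedly invertible, and a Schur complement reduces the Fredholmness question to that of
\[
\mathbf{T}(K, z) = \begin{pmatrix} H_{00}(K) - z & H_{01} \\ H_{01}^{*} & H_{11}(K) - z - H_{12}(H_{22}(K) - z)^{-1} H_{12}^{*} \end{pmatrix}
\]
on $\mathcal{H}_0 \oplus \mathcal{H}_1$. Direct computation shows the Schur term $H_{12}(H_{22}(K) - z)^{-1} H_{12}^{*}$ decomposes as a multiplication operator on $\mathcal{H}_1$ plus an integral operator with bounded kernel on the compact domain $(\mathbb{T}^3)^2$, the latter being Hilbert--Schmidt and hence compact. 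Modulo this compact perturbation, together with a further compact correction coming from eliminating the $\mathcal{H}_0$-$\mathcal{H}_1$ coupling $H_{01}$, the reduced operator admits a direct-integral decomposition over the spectator momentum $p \in \mathbb{T}^3$; a computation of the fiber operators, accounting for the $\sqrt{2}$ factors arising from the bosonic symmetrization in $\mathcal{H}_2$, identifies each fiber as unitarily equivalent to $\mathbf{h}(K-p) + l_1\varepsilon(p) - z$. Therefore $\mathbf{T}(K,z)$ is Fredholm iff $z - l_1 \varepsilon(p) \notin \sigma_{\rm disc}(\mathbf{h}(K-p))$ for every $p \in \mathbb{T}^3$, as required.

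The principal obstacle lies in the second inclusion: executing the Schur-complement reduction cleanly, identifying the fibers after compact perturbation with the Friedrichs models $\mathbf{h}(K-p)$, and verifying that the various compact corrections and symmetrization factors do not introduce spurious points into the essential spectrum. The Weyl-sequence direction, while somewhat involved due to the need to symmetrize a concentrated function against a non-localized eigenfunction, follows a fairly standard template once the eigenvalue equations of the Friedrichs model are used to arrange the cancellations.
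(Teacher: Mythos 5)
The paper does not prove Theorem \ref{THM 2.1}: it is quoted from \cite{ALR, ALR1, LRmz, RasTosh2019}, so there is no in-paper proof to compare against. Your argument is correct and is essentially the standard one used in those references: Weyl sequences built from concentrated two-particle states (for $[m_K;M_K]$) and from $\chi_n\otimes$ eigenvector combinations of ${\bf h}(K-p_*)$ (for the discrete branches) give ``$\supseteq$'', while the Frobenius--Schur reduction gives ``$\subseteq$'', since the multiplication part of the reduced $(2,2)$ entry is exactly $w_1(K;p)-z-\tfrac{1}{2}\int_{{\Bbb T}^3}v_1^2(t)(w_2(K;p,t)-z)^{-1}dt=\Delta(K-p\,;z-l_1\varepsilon(p))$, whose zeros are characterized by Lemma \ref{LEM 3.1}. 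The only loose phrasing is the claim of a ``direct-integral decomposition'' with fibers ${\bf h}(K-p)+l_1\varepsilon(p)-z$: the scalar component ${\mathcal H}_0={\Bbb C}$ does not fiber over $p$, but since the entire ${\mathcal H}_0$ row and column are finite rank this is harmless, and what survives modulo compacts is precisely the multiplication operator by the Fredholm determinant, which yields the stated criterion.
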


Let $\Lambda$ a subset of ${\Bbb T}^3$ given by
$$
\Lambda:= \left\{ (p^{(1)},p^{(2)},p^{(3)}): p^{(i)} \in \left\{0,
\pm \frac{2}{n} \pi; \pm \frac{4}{n} \pi; \dots; \pm \frac{n'}{n}
\pi \right\} \cup \Pi_n, \,\, i=1,2,3 \right\},
$$
where
$$
n':=\left \lbrace
\begin{array}{ll}
n-2,\,\, \mbox{if}\,\, n \,\, \mbox{is even} \\
n-1,\,\, \mbox{if}\,\, n \,\, \mbox{is odd}
\end{array} \right. \quad \mbox{and} \quad
\Pi_n:=\left \lbrace
\begin{array}{ll}
\{\pi\},\,\, \mbox{if}\,\, n \,\, \mbox{is even} \\
\,\,\emptyset,\,\quad \mbox{if}\,\, n \,\, \mbox{is odd}
\end{array} \right.
$$

Direct calculation shows that the cardinality of $\Lambda$ is
equal to $n^3.$ It is easy to check that for any $K \in \Lambda$
the function $w_2(K;\cdot,\cdot)$ has non-degenerate zero minimum
at the points  of $\Lambda \times \Lambda,$ that is, $m_K=0$ for
$K \in \Lambda.$

The following assumption we be needed throughout the paper: the
function $v_1(\cdot)$ is either even or odd function on each
variable and there exist all second order continuous partial
derivatives of $v_1(\cdot)$ on ${\Bbb T}^3.$

Since ${\bf 0}=(0,0,0)\in \Lambda$ the definition of the functions
$w_1(\cdot; \cdot)$ and $w_2(\cdot;\cdot,\cdot)$ implies the
identity ${\bf h}({\bf 0}) \equiv {\bf h}(k)$ for all $k \in
\Lambda.$

Let us denote by $C({\Bbb T}^3)$ and $L_1({\Bbb T}^3)$ the Banach
spaces of continuous and integrable functions on ${\Bbb T}^3,$
respectively.

\begin{defn}\label{defn of resonance}
The operator ${\bf h}({\bf 0})$ is said to have a zero energy
resonance, if the number $1$ is an eigenvalue of the integral
operator given by
$$
(G \psi)(q)=\frac{v_1(q)}{2(l_1+l_2)} \int_{{\Bbb T}^3}
\frac{v_1(t)\psi(t)}{\varepsilon(t)}dt, \quad \psi \in C({\Bbb
T}^3)
$$
and at least one $($up to a normalization constant$)$ of the
associated eigenfunctions $\psi$ satisfies the condition $\psi(p')
\neq 0$ for some $p' \in \Lambda.$ If the number $1$ is not an
eigenvalue of the operator $G,$ then we say that $z=0$ is a
regular type point for the operator ${\bf h}({\bf 0}).$
\end{defn}

We notice that in Definition $\ref{defn of resonance}$ the
requirement of the existence of the eigenvalue $1$ of $G$
corresponds to the existence of a solution of ${\bf h}({\bf
0})f=0$ and the condition $\psi(p') \neq 0$ for some $p' \in
\Lambda$ implies that the solution $f$ of this equation does not
belong to ${\mathcal H}_0 \oplus {\mathcal H}_1.$ More precisely,
if the operator ${\bf h}({\bf 0})$ has a zero energy resonance,
then the solution $\psi(\cdot)$ of $G\psi=\psi$ is equal to
$v_1(\cdot)$ (up to constant factor) and the vector $f=(f_0,
f_1),$ where
\begin{equation}\label{formula for f-alpha}
f_0={\rm const} \neq 0, \quad f_1(q)=-\frac{v_1(q) f_0}{\sqrt{2}
(l_1+l_2) \varepsilon(q)},
\end{equation}
obeys the equation ${\bf h}({\bf 0})f=0$ such that $f_1 \in
L_1({\Bbb T}^3) \setminus L_2({\Bbb T}^3)$  (see Lemma
\ref{solution of hf=0}). If the operator ${\bf h}({\bf 0})$ has a
zero eigenvalue, then the vector $f=(f_0, f_1),$ where $f_0$ and
$f_1$ are defined by \eqref{formula for f-alpha}, again obeys the
equation ${\bf h}({\bf 0})f=0$ and $f_1 \in L_2({\Bbb T}^3)$ (see
proof of the assertion (i) of Lemma \ref{resonance or
eigenvalue}).

As in the introduction, let us denote by $\tau_{\rm ess}(K)$ the
bottom of the essential spectrum of ${\bf H}(K)$ and by $N(K,z)$
the number of eigenvalues of ${\bf H}(K)$ on the left of $z,\,z
\leq \tau_{\rm ess}(K).$

Note that if the operator ${\bf h}({\bf 0})$ has either a zero
energy resonance or a zero eigenvalue, then for any $K \in
\Lambda$ and $p \in {\Bbb T}^3$ the operator ${\bf h}(K-p)+l_1
\varepsilon(p) {\bf I}$ is non-negative (see Lemma
\ref{positivity}), where ${\bf I}$ is the identity operator in
${\mathcal H}_0 \oplus {\mathcal H}_1.$ Hence Theorem \ref{THM
2.1} and equality $m_K=0,$ $K \in \Lambda$ imply that $\tau_{\rm
ess}(K)=0$ for all $K \in \Lambda.$

The main results of the present paper as follows.

\begin{thm}\label{THM 2.2} Let $K \in \Lambda$ and one of the following assumptions hold:\\
{\rm (i)} the operator ${\bf h}({\bf 0})$ has a zero eigenvalue;\\
{\rm (ii)} ${\bf h}({\bf 0}) \geq 0$ and a zero is the regular
type point for ${\bf h}({\bf 0}).$

Then the operator ${\bf H}(K)$ has finitely many negative
eigenvalues.
\end{thm}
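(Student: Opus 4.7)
The plan is to use the Birman--Schwinger principle developed in Section 4. Eliminating the components $f_0$ and $f_2$ from the eigenvalue equation ${\bf H}(K)F=zF$ by Schur complement (valid for $z<0=\tau_{\rm ess}(K)$, since $H_{00}(K)-z$ is a positive scalar and $H_{22}(K)-z$ is strictly positive multiplication), the problem reduces to an equation on $f_1\in L_2({\Bbb T}^3)$ of the form
$$
\Delta(K-p,\, z-l_1\varepsilon(p))\, f_1(p)=({\mathcal K}_0(K,z)f_1)(p)+({\mathcal K}_1(K,z)f_1)(p),
$$
where $\Delta(k,\zeta):=l_2\varepsilon(k)+1-\zeta-\frac{1}{2}\int v_1^2(t)(E_k(t)-\zeta)^{-1}dt$ is the Weinstein function of the Friedrichs model ${\bf h}(k)$, ${\mathcal K}_0$ is the rank-one operator generated by $H_{01}$ and $H_{00}(K)$, and ${\mathcal K}_1$ is the integral operator with kernel $v_1(p)v_1(q)\bigl(2(w_2(K;p,q)-z)\bigr)^{-1}$ produced by $H_{12}$. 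Dividing by $\Delta^{1/2}$ on both sides converts the problem into a family of compact self-adjoint operators ${\bf T}(K,z)$ on $L_2({\Bbb T}^3)$, and by the Birman--Schwinger correspondence $N(K,z)$ equals the number of eigenvalues of ${\bf T}(K,z)$ strictly greater than $1$.

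The crucial analytic step is to show that ${\bf T}(K,z)$ extends continuously in operator norm to $z=0$, with ${\bf T}(K,0)$ compact. Since $K\in\Lambda$ one has ${\bf h}(K-p)\equiv{\bf h}({\bf 0})$ for all $p\in\Lambda$, so both the potential zeros of the symbol $\Delta(K-p,-l_1\varepsilon(p))$ and those of the denominator $w_2(K;p,q)$ concentrate on $\Lambda$ and $\Lambda\times\Lambda$ respectively. Under hypothesis (ii), a threshold analysis of the Friedrichs model (Section 3) gives $\Delta({\bf 0},0)>0$, so the symmetrizing factor stays uniformly away from zero near every $p'\in\Lambda$. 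Under hypothesis (i), the zero eigenvalue of ${\bf h}({\bf 0})$ corresponds to the solution \eqref{formula for f-alpha} with $f_1\in L_2({\Bbb T}^3)$, i.e.\ $v_1/\varepsilon\in L_2({\Bbb T}^3)$, which forces $v_1$ to vanish sufficiently fast at the points of $\Lambda$ to compensate the $|p-p'|^2$-singularity of $w_2$ near $\Lambda\times\Lambda$. In either case, combining this vanishing/positivity with the $C^2$-smoothness and symmetry of $v_1$ and the non-degenerate quadratic expansion of $\varepsilon$ at each $p'\in\Lambda$, a standard Hilbert--Schmidt estimate shows that ${\bf T}(K,0)$ is Hilbert--Schmidt and $\|{\bf T}(K,z)-{\bf T}(K,0)\|_{\rm HS}\to 0$ as $z\to-0$.

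Once compactness and norm continuity up to the threshold are established, the proof is concluded by a soft argument: $\sigma_{\rm ess}({\bf T}(K,0))=\{0\}$, so the spectrum of ${\bf T}(K,0)$ in $(1,\infty)$ consists of finitely many eigenvalues of finite multiplicity; by standard perturbation theory for discrete spectra of compact operators, the number of eigenvalues of ${\bf T}(K,z)$ exceeding $1$ stays uniformly bounded for $z$ in a left neighbourhood of $0$, yielding $\lim_{z\to-0}N(K,z)<\infty$. The main obstacle is the Hilbert--Schmidt estimate on ${\bf T}(K,0)$ near $\Lambda\times\Lambda$: one must quantitatively match the prescribed vanishing of $v_1$ (in case (i)) or the strict positivity of $\Delta$ (in case (ii)) against the non-degenerate quadratic zero of $w_2(K;\cdot,\cdot)$, and precisely this matching rules out the logarithmic divergence of $\mathrm{tr}\,{\bf T}(K,0)^2$ that would otherwise produce the Efimov-type infinite accumulation of eigenvalues treated in the next section.
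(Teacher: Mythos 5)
Your proposal is correct and follows essentially the same route as the paper: reduce via the Birman--Schwinger principle to a Birman--Schwinger operator built from the symbol $\Delta(K-p\,;z-l_1\varepsilon(p))$, show that under either hypothesis this operator is Hilbert--Schmidt and norm-continuous up to $z=0$ (using $\Delta({\bf 0}\,;0)>0$ in case (ii), and the vanishing of $v_1$ on $\Lambda$ against the quadratic zeros of $\Delta$ and $w_2$ in case (i)), and then conclude finiteness by a soft Weyl-inequality/perturbation argument. The only cosmetic difference is that you eliminate both $f_0$ and $f_2$ to work with a single operator on $L_2({\Bbb T}^3)$ plus a rank-one correction, whereas the paper retains a $2\times 2$ block operator matrix on ${\mathcal H}_0\oplus{\mathcal H}_1$; this does not affect the argument.
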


\begin{thm}\label{THM 2.3} Let $K \in \Lambda.$ If the operator ${\bf h}({\bf 0})$ has a zero energy
resonance, then the operator ${\bf H}(K)$ has infinitely many
negative eigenvalues accumulating at zero and the function
$N(K,\cdot)$ obeys the relation
\begin{equation}\label{2.2}
\lim\limits_{z\to -0}\frac{N(K,z)}{|\log|z||}=\, {\mathcal U}_0,
\quad 0<{\mathcal U}_0<\infty.
\end{equation}
\end{thm}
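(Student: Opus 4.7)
The argument follows Sobolev's scheme \cite{Sob} and its lattice adaptations \cite{AL,ALM,ALR}, adjusted to handle the non-conservation of the particle number and the multi-minimum set $\Lambda$. The first step is a Birman--Schwinger reduction. Using the modified Birman--Schwinger principle developed in Section~4, I would rewrite, for every $z<0$,
$$
N(K,z)\;=\;n\bigl(1,T(K,z)\bigr),
$$
where $n(\mu,\cdot)$ denotes the number of eigenvalues of a compact self-adjoint operator exceeding $\mu$, and $T(K,z)$ is the compact self-adjoint operator on $\mathcal H_2$ obtained as the Schur complement of $\mathbf H(K)-z\mathbf I$ against $\mathcal H_2$: the annihilation operator $H_{01}$ contributes a rank-one correction to the multiplier $w_1(K;p)-z$, producing an effective Friedrichs resolvent $\tilde{\mathbf h}(K;z)^{-1}$, which the creation operator $H_{12}$ then sandwiches against the multiplication by $(w_2(K;p,q)-z)^{-1/2}$.

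Next, I would extract the singular part of $T(K,z)$. The resonance hypothesis --- that $1$ is an eigenvalue of $G$ with eigenfunction $v_1$ --- is precisely the statement that $\tilde{\mathbf h}(K;z)^{-1}$ develops a pole at $z=0$ whose residue is the rank-one projector along $v_1$. Substituting this pole into $T(K,z)$ yields a decomposition $T(K,z) = T_{\mathrm{sing}}(K,z)+T_{\mathrm{reg}}(K,z)$ in which $T_{\mathrm{reg}}(K,z)$ is uniformly (in $z$) compact and norm-continuous up to $z=0$, and therefore contributes only $o(|\log|z||)$ eigenvalues exceeding $1$ by the Weyl inequality, while $T_{\mathrm{sing}}(K,z)$ has the Efimov--Birman kernel built from $v_1(q)/(2(l_1+l_2)\varepsilon(q))$ and the singular multiplier $(w_2(K;p,q)-z)^{-1/2}$, concentrated on neighbourhoods of $\Lambda\times\Lambda$.

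Third, I would carry out the Sobolev rescaling. Because $w_2(K;\cdot,\cdot)$ has non-degenerate quadratic zero minima throughout $\Lambda\times\Lambda$ and $\varepsilon$ has non-degenerate quadratic zeros at every point of $\Lambda$ with the same Hessian, localising $T_{\mathrm{sing}}(K,z)$ by a smooth partition of unity to small balls around each minimum and rescaling variables by $|z|^{1/2}$ makes the local pieces converge strongly, as $z\to -0$, to a single homogeneous dilation-invariant integral operator on $L_2(\mathbb R^3\times\mathbb R^3)$, identical to the one appearing in the continuous Efimov problem with masses $l_1,l_2$. Mellin-transforming this limit operator along the half-line reduces the counting to an explicit one-dimensional computation and produces
$$
n\bigl(1,T_{\mathrm{sing}}(K,z)\bigr)\;=\;\mathcal U_0\,|\log|z||\,(1+o(1)), \quad z\to -0,
$$
with a positive finite constant $\mathcal U_0$ depending only on the ratio $l_1/l_2$.

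The main technical obstacle is the combination of the second and third steps: establishing a uniform-in-$z$ separation $T=T_{\mathrm{sing}}+T_{\mathrm{reg}}$, and then controlling the interaction between the $|\Lambda|^2=n^6$ local minima of $w_2(K;\cdot,\cdot)$ so that the local pieces assemble into a single finite constant $\mathcal U_0$ rather than generating an $n^6$ factor; this is what ultimately gives an asymptotic coefficient that is the same for every $K\in\Lambda$ independently of the cardinality of $\Lambda$. The remaining Mellin analysis of the homogeneous limiting operator and the verification that $0<\mathcal U_0<\infty$ then follow the by-now-standard template of \cite{Sob,ALR}.
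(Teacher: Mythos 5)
Your overall scheme (Birman--Schwinger reduction, extraction of a singular part, rescaling to a dilation-invariant operator and Mellin/Toeplitz analysis \`a la Sobolev) is the same as the paper's, up to the choice of channel: you form the Schur complement onto ${\mathcal H}_2$, whereas the paper works with a $2\times 2$ block operator $\widehat{\bf T}(K,z)$ on ${\mathcal H}_0\oplus{\mathcal H}_1$ whose kernel carries the factors $\Delta(K-p\,;z-l_1\varepsilon(p))^{-1/2}$; these are spectrally equivalent. However, there is one genuine gap, and it sits exactly where you yourself locate ``the main technical obstacle'': you state that the $|\Lambda|^2$ local pieces must ``assemble into a single finite constant ${\mathcal U}_0$ rather than generating an $n^6$ factor,'' but you give no mechanism for this. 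This is not a routine matter: after localization the singular operator is an $n_0\times n_0$ block matrix ($n_0=|\Lambda_0|$, the points of $\Lambda$ where $v_1\neq 0$), each diagonal block of which individually produces the full $U(\gamma)\,|\log|z||$ asymptotics, so a naive assembly would give $n_0\,{\mathcal U}_0$ at best and something worse once off-diagonal couplings are counted. The paper's resolution is an algebraic factorization: the $(i,j)$ block of the localized operator equals $v_1(p_i)v_1(p_j)\,T_1(r)$ with one and the same operator $T_1(r)$, so the whole block matrix is ${\bf A}_r{\bf E}$ with ${\bf A}_r$ an $n_0\times 1$ column and ${\bf E}$ a $1\times n_0$ row; since $n(\gamma,{\bf A}_r{\bf E})=n(\gamma,{\bf E}{\bf A}_r)$ and ${\bf E}{\bf A}_r=\bigl(\sum_i v_1^2(p_i)\bigr)T_1(r)$, the count collapses to a single copy, and the prefactor $\sum_i v_1^2(p_i)$ is cancelled by the normalization $D\propto\bigl(\sum_{q'\in\Lambda_0}v_1^2(q')\bigr)^{-1}$ coming from the expansion of the Fredholm determinant (Lemma \ref{LEM 4.1}). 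Without identifying this tensor structure (or an equivalent argument) your proof does not yield \eqref{2.2} with a constant independent of $n$ and of $v_1$.

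A secondary inaccuracy: in the lattice resonance case the effective Friedrichs resolvent does not ``develop a pole at $z=0$ with rank-one residue along $v_1$'' --- the would-be zero-energy eigenfunction $v_1/\varepsilon$ is not in $L_2$, and what actually happens is that the Fredholm determinant vanishes like $\sqrt{m|p-p'|^2+2|z|/n^2}$ jointly in the fiber variable and the spectral parameter (Lemma \ref{LEM 4.1}). The singular kernel you subsequently write down is consistent with this, so the slip is conceptual rather than fatal, but the pole/residue picture corresponds to the zero-eigenvalue case, which leads to finitely many eigenvalues (Theorem \ref{THM 2.2}), not to the Efimov asymptotics.
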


\begin{rem}
The constant ${\mathcal U}_0$ does not depend on the function
$v_1(\cdot).$ It is positive and depends only on the ratio
$l_2/l_1.$
\end{rem}

\begin{rem}
Clearly, by equality \eqref{2.2} the infinite cardinality of the
negative discrete spectrum of ${\bf H}(K)$ follows automatically
from the positivity of ${\mathcal U}_0$.
\end{rem}

\begin{rem}
It is surprising that the asymptotics \eqref{2.2} doesn't depends
on the cardinality of $\Lambda,$ that is, this asymptotics is the
same for all $n \in {\Bbb N}.$ Since $\Lambda|_{n=1}=\{{\bf 0}\}$
in fact, a result similar to Theorem $\ref{THM 2.3}$ was proved in
\cite{ALR} for $n=1$ and $K=0.$
\end{rem}

\section{Some spectral properties of the family of Friedrichs models ${\bf h}(k)$}

In this section we study some spectral properties of the family of
Friedrichs models ${\bf h}(k),$ which plays an important role in
the study of spectral properties of ${\bf H}(K).$

Let the operator ${\bf h}_0(k),$ $k\in {\Bbb T}^3$ acts in
${\mathcal H}_0\oplus {\mathcal H}_1$ as
$$
{\bf h}_0(k)=\left( \begin{array}{cc}
0 & 0\\
0 & h_{11}(k)\\
\end{array}
\right).
$$

The perturbation ${\bf h}(k)-{\bf h}_0(k)$ of the operator ${\bf
h}_0(k)$ is a self-adjoint operator of rank 2, and thus, according
to the Weyl theorem, the essential spectrum of the operator ${\bf
h}(k)$ coincides with the essential spectrum of ${\bf h}_0(k).$ It
is evident that $\sigma_{\rm ess}({\bf h}_0(k))=[E_{\rm min}(k);
E_{\rm max}(k)],$ where the numbers $E_{\rm min}(k)$ and $E_{\rm
max}(k)$ are defined by
$$
E_{\rm min}(k):= \min_{q\in {\Bbb T}^3} E_k(q) \quad \mbox{and}
\quad E_{\rm max}(k):= \max_{q\in {\Bbb T}^3} E_k(q).
$$
This yields $\sigma_{\rm ess}({\bf h}(k))=[E_{\rm min}(k); E_{\rm
max}(k)].$

For any $k\in {\Bbb T}^3$ we define an analytic function
$\Delta(k\,; \cdot)$ (the Fredholm determinant associated with the
operator ${\bf h}(k)$) in ${\Bbb C} \setminus [E_{\rm min}(k);
E_{\rm max}(k)]$ by
$$
\Delta(k\,; z):=l_2
\varepsilon(k)+1-z-\frac{1}{2}\int_{{\Bbb T}^3}
\frac{v_1^2(t)dt}{E_k(t)-z}.
$$

The following lemma \cite{ALR, RDMFAT2019, RDNPCM2019} is a simple consequence of the
Birman-Schwinger principle and the Fredholm theorem.

\begin{lem}\label{LEM 3.1} For any $k\in {\Bbb T}^3$ the operator ${\bf h}(k)$ has an eigenvalue
$z \in {\Bbb C} \setminus [E_{\rm min}(k); E_{\rm max}(k)]$ if and
only if $\Delta(k\,; z)=0.$
\end{lem}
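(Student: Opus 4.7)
The proof is a direct Birman--Schwinger / Schur complement reduction, since $\mathbf{h}(k)$ is a $2 \times 2$ block operator whose lower-right entry $h_{11}(k)$ is the multiplication operator by the continuous function $E_k(\cdot)$ on $L_2(\mathbb{T}^3)$. The plan is to write out the eigenvalue equation componentwise, invert $h_{11}(k) - z$ on $\mathcal{H}_1$ (possible exactly because $z \notin [E_{\min}(k); E_{\max}(k)] = \sigma(h_{11}(k))$), and show that the remaining scalar equation for $f_0$ is precisely $\Delta(k;z) f_0 = 0$.

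Concretely, I would fix $z \in \mathbb{C} \setminus [E_{\min}(k); E_{\max}(k)]$ and consider a nonzero vector $f = (f_0, f_1) \in \mathcal{H}_0 \oplus \mathcal{H}_1$ satisfying $\mathbf{h}(k) f = z f$. Writing this componentwise using the explicit forms of $h_{00}(k), h_{01}, h_{01}^*, h_{11}(k)$, the first component yields
\begin{equation*}
(l_2 \varepsilon(k) + 1 - z) f_0 + \frac{1}{\sqrt{2}} \int_{\mathbb{T}^3} v_1(t) f_1(t)\, dt = 0,
\end{equation*}
while the second component gives $(E_k(q) - z) f_1(q) = -\tfrac{1}{\sqrt{2}} v_1(q) f_0$ for a.e.\ $q \in \mathbb{T}^3$. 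Because the continuous function $E_k(\cdot) - z$ is bounded away from zero on $\mathbb{T}^3$ (as $z$ lies off the range of $E_k$), the second equation can be solved uniquely for $f_1$:
\begin{equation*}
f_1(q) = \frac{v_1(q) f_0}{\sqrt{2}\,(z - E_k(q))}, \qquad f_1 \in L_2(\mathbb{T}^3).
\end{equation*}
Substituting this expression into the first component equation and simplifying gives $\Delta(k;z) f_0 = 0$.

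For the two directions of the equivalence I would then argue as follows. If $\Delta(k;z) = 0$, pick any $f_0 \neq 0$, define $f_1$ by the displayed formula, and verify by direct substitution that $(f_0, f_1)$ is a nonzero eigenvector at $z$. Conversely, if $\Delta(k;z) \neq 0$, then the derived scalar equation forces $f_0 = 0$; the formula for $f_1$ then gives $f_1 \equiv 0$, so $f = 0$, contradicting the assumption that $z$ is an eigenvalue. This establishes the biconditional in Lemma~\ref{LEM 3.1}.

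There is no substantive obstacle here; the only subtlety worth checking carefully is that the integral defining $\Delta(k;z)$ (and the corresponding $f_1$ constructed above) is a genuine element of $L_2(\mathbb{T}^3)$ for $z$ outside the essential spectrum, which follows immediately from $\operatorname{dist}(z, [E_{\min}(k); E_{\max}(k)]) > 0$ and the boundedness of $v_1$.
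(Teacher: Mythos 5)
Your proof is correct and is exactly the Birman--Schwinger/Schur-complement reduction that the paper has in mind: the paper itself gives no proof of Lemma \ref{LEM 3.1}, stating only that it is ``a simple consequence of the Birman--Schwinger principle and the Fredholm theorem'' and citing earlier references. Your componentwise elimination of $f_1$ (legitimate since $E_k(\cdot)-z$ is bounded away from zero off $[E_{\rm min}(k);E_{\rm max}(k)]$) and the resulting scalar equation $\Delta(k;z)f_0=0$, together with the two-way argument for the equivalence, is precisely that standard reduction, so nothing is missing.
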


Since for any $k \in \Lambda$ the function $E_k(\cdot)$ has
non-degenerate zero minimum at the points of $\Lambda$ and the
function $v_1(\cdot)$ is a continuous on ${\Bbb T}^3,$ for any $k
\in {\Bbb T}^3$ the integral
$$
\int_{{\Bbb T}^3} \frac{v_1^2(t)dt}{E_k(t)}
$$
is positive and finite. The Lebesgue dominated convergence theorem
and the equality $\Delta({\bf 0}\,; 0)=\Delta(k\,; 0)$ for $k \in
\Lambda$ yield
$$
\Delta({\bf 0}\,; 0)=\lim\limits_{k\to k'} \Delta(k\,; 0),\quad
k'\in \Lambda.
$$

For some $\delta>0$ and $p_0 \in {\Bbb T}^3$ we set
$$
U_\delta(p_0):=\{p\in {\Bbb T}^3: |p-p_0|<\delta\}, \quad {\Bbb
T}_\delta: = {\Bbb T}^3 \setminus \bigcup_{q'\in \Lambda}
U_\delta(q').
$$

The following lemma establishes in which cases the bottom of the
essential spectrum is a threshold energy resonance or eigenvalue.

\begin{lem}\label{resonance or eigenvalue}
{\rm (i)} The operator ${\bf h}({\bf 0})$ has a zero eigenvalue if
and only if  $\Delta({\bf 0}\,; 0)=0$ and $v_1(q')=0$ for all
$q'\in \Lambda;$

{\rm (ii)} The operator ${\bf h}({\bf 0})$ has a zero energy
resonance if and only if $\Delta({\bf 0}\,; 0)=0$ and $v_1(q')
\neq 0$ for some $q'\in \Lambda.$
\end{lem}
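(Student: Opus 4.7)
The plan is to reduce the eigenvalue problem $\mathbf{h}(\mathbf{0})f=0$ with $f=(f_0,f_1)\in\mathcal{H}_0\oplus\mathcal{H}_1$ to a single scalar compatibility condition, and then to analyze the integrability of the resulting candidate function $f_1$ near the finite zero set $\Lambda$ of $\varepsilon$. Likewise, the resonance assertion (ii) will be read off from a rank-one structure of the operator $G$.

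First I would write the system $\mathbf{h}(\mathbf{0})f=0$ componentwise as
\begin{align*}
& f_0+\tfrac{1}{\sqrt{2}}\int_{{\Bbb T}^3}v_1(t)f_1(t)\,dt=0,\\
& (l_1+l_2)\varepsilon(q)f_1(q)+\tfrac{f_0}{\sqrt{2}}v_1(q)=0\quad\text{for a.e. }q\in{\Bbb T}^3.
\end{align*}
Since $\varepsilon$ vanishes only on the finite (hence measure-zero) set $\Lambda$, the case $f_0=0$ immediately forces $f_1=0$ in $L_2({\Bbb T}^3)$. Thus any nontrivial solution has $f_0\neq 0$, and the second equation then determines $f_1$ pointwise by the formula \eqref{formula for f-alpha}. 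Substituting this expression back into the first equation and recognizing the integral in the definition of $\Delta$ gives $f_0\,\Delta(\mathbf{0};0)=0$, so $\Delta(\mathbf{0};0)=0$ is a necessary compatibility condition. Conversely, whenever $\Delta(\mathbf{0};0)=0$ the pair $(f_0,f_1)$ from \eqref{formula for f-alpha} satisfies both equations formally, so the only remaining question for assertion (i) is the $L_2$-membership of $f_1$.

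To analyze this, I would use that the zero minimum of $\varepsilon$ at every $q'\in\Lambda$ is non-degenerate, which yields two-sided bounds $c_1|q-q'|^2\le\varepsilon(q)\le c_2|q-q'|^2$ in a small ball around $q'$, combined with the $C^2$ smoothness of $v_1$ and Taylor expansion there. If $v_1(q')=0$ at every $q'\in\Lambda$, then $|v_1(q)|\le C|q-q'|$ near each $q'$, so $|f_1(q)|^2\le C'|q-q'|^{-2}$, which is integrable in a three-dimensional neighborhood; away from $\Lambda$ the denominator $\varepsilon(q)$ is bounded below and the integrand is harmless. If instead $v_1(q')\neq 0$ for some $q'\in\Lambda$, then $|f_1(q)|^2\ge c''|q-q'|^{-4}$ near $q'$, which is not integrable in dimension three, although the estimate $|f_1(q)|\le C''|q-q'|^{-2}$ still gives $f_1\in L_1({\Bbb T}^3)\setminus L_2({\Bbb T}^3)$ (the $L_1$ statement being exactly the one used in the discussion preceding the lemma). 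This establishes (i).

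For (ii), I would invoke Definition \ref{defn of resonance} directly. The integral operator $G$ is of rank one, so any nontrivial $\psi\in C({\Bbb T}^3)$ with $G\psi=\psi$ must be a scalar multiple of $v_1$; inserting $\psi=c\,v_1$ into $G\psi=\psi$ collapses the equation to the single scalar identity $\frac{1}{2(l_1+l_2)}\int_{{\Bbb T}^3}v_1^2(t)/\varepsilon(t)\,dt=1$, which is precisely $\Delta(\mathbf{0};0)=0$. The additional requirement $\psi(p')\neq 0$ for some $p'\in\Lambda$ in Definition \ref{defn of resonance} then translates directly into $v_1(p')\neq 0$ for some $p'\in\Lambda$. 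The main obstacle in the whole argument is the sharp $L_2$-versus-$L_1$ dichotomy near $\Lambda$: it requires carefully combining the quadratic behavior of $\varepsilon$ at its non-degenerate minima with first-order Taylor information on $v_1$ in three dimensions. Once this local threshold analysis is in place, the remaining parts of both (i) and (ii) amount to symbolic manipulation with $\Delta(\mathbf{0};0)$.
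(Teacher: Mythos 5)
Your proposal is correct and follows essentially the same route as the paper: reduce $\mathbf{h}(\mathbf{0})f=0$ to the two-component system, extract the compatibility condition $\Delta(\mathbf{0};0)=0$, and decide $L_2$-membership of $f_1$ from the quadratic non-degenerate zeros of $\varepsilon$ on $\Lambda$ versus the vanishing of $v_1$ there, with part (ii) read off from the rank-one structure of $G$. Your treatment is marginally tidier in two spots --- you explicitly dispose of the $f_0=0$ case, and you use only the first-order Taylor bound $|v_1(q)|\le C|q-q'|$ where the paper posits a two-sided bound of exact order $\alpha$ --- but these are refinements of the same argument, not a different one.
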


\begin{proof} (i) "Only If Part". Suppose $f=(f_0, f_1)\in {\mathcal H}_0 \oplus {\mathcal H}_1$
is an eigenvector of the operator ${\bf h}({\bf 0})$ associated
with the zero eigenvalue. Then $f_0$ and $f_1$ satisfy the system
of equations
\begin{align}
& f_0+ \frac{1}{\sqrt{2}} \int_{{\Bbb T}^3} v_1(t)f_1(t)dt=0;\nonumber\\
& \frac{1}{\sqrt{2}} v_1(q)f_0+(l_1+l_2)\varepsilon(q)f_1(q)=0.\label{system of equations}
\end{align}

From \eqref{system of equations} we find that $f_0$ and $f_1$ are
given by \eqref{formula for f-alpha} and from the first equation
of \eqref{system of equations} we derive the equality $\Delta({\bf
0}\,; 0)=0.$

Now we show that $f_1 \in L_2({\Bbb T}^3)$ if and only if $v_1(q')
= 0$ for all $q' \in \Lambda.$ Indeed. If for some $q' \in
\Lambda$ we have $v_1(q')=0$ (resp. $v_1(q')\neq 0$), then there
exist the numbers $C_1,\,C_2,\,C_3>0,$ $\alpha \geq 1$ and
$\delta>0$ such that
\begin{equation}\label{estimate for v1}
C_1|q-q'|^\alpha \leq |v_1(q)| \leq C_2|q-q'|^\alpha,\quad q\in
U_\delta(q'),
\end{equation}
respectively
\begin{equation}\label{estimate for v1-1}
|v_1(q)|\geq C_3,\quad q\in U_\delta(q').
\end{equation}

The definition of the function $\varepsilon(\cdot)$ implies that
there exist the numbers $C_1,\,C_2,\,C_3>0$ and $\delta>0$ such
that
\begin{equation}\label{estimate for epsilon 1}
C_1|q-q'|^2 \leq \varepsilon(q) \leq C_2|q-q'|^2,\quad q\in
U_\delta(q'),\quad q' \in \Lambda
\end{equation}
\begin{equation}\label{estimate for epsilon 2}
\varepsilon(q) \geq C_3, \quad q \in {\Bbb T}_\delta.
\end{equation}

We have
\begin{equation}\label{integral for f1}
\int_{{\Bbb T}^3} |f_1(t)|^2dt=\frac{|f_0|^2}{2(l_1+l_2)^2}
\sum_{q' \in \Lambda}\, \int_{U_\delta(q')}
\frac{v_1^2(t)dt}{\varepsilon^2(t)}+ \frac{|f_0|^2}{2(l_1+l_2)^2}
\int_{{\Bbb T}_\delta} \frac{v_1^2(t)dt}{\varepsilon^2(t)}.
\end{equation}

If $v_1(q')=0$ for all $q' \in \Lambda,$ then using estimates
\eqref{estimate for v1}-\eqref{estimate for epsilon 2} we obtain
that
$$
\int_{{\Bbb T}^3} |f_1(t)|^2dt \leq C_1 \sum_{q' \in
\Lambda}\, \int_{U_\delta(q')}
\frac{|t-q'|^{2\alpha}}{|t-q'|^4}dt+C_2<\infty.
$$

In the case $v_1(q')\neq 0$ for some $q' \in \Lambda,$ an
application of estimates \eqref{estimate for v1-1},
\eqref{estimate for epsilon 1} imply
$$
\int_{{\Bbb T}^3} |f_1(t)|^2dt \geq C_1
\int_{U_\delta(q')} \frac{dt}{|t-q'|^4}=\infty.
$$

Therefore $f_1 \in L_2({\Bbb T}^3)$ if and only if $v_1(q') = 0$
for all $q' \in \Lambda.$

"If Part". Let $\Delta({\bf 0}\,; 0)=0$ and $v_1(q')=0$ for all
$q'\in \Lambda.$ Then the vector $f=(f_0, f_1),$ where $f_0$ and
$f_1$ are defined by \eqref{formula for f-alpha}, obeys the
equation ${\bf h}({\bf 0})f=0$ and as we show in "Only If Part"
that $f_1 \in L_2({\Bbb T}^3).$

(ii) "Only If Part". Let the operator ${\bf h}({\bf 0})$ have a
zero energy resonance. Then by Definition \ref{defn of resonance}
the equation
\begin{equation}\label{equation for resonance}
\psi(q)=\frac{v_1(q)}{2(l_1+l_2)}\int_{{\Bbb T}^3}
\frac{v_1(t)\psi(t)dt}{\varepsilon(t)}, \quad 
\psi\in C({\Bbb T}^3)
\end{equation}
has a simple solution $\psi\in C({\Bbb T}^3)$ and $\psi(q')\neq 0$
for some $q' \in \Lambda.$ It is easy to see that this solution is
equal to $v_1(\cdot)$ (up to a constant factor) and hence
$\Delta({\bf 0}\,; 0)=0.$

"If Part". Let the equality $\Delta({\bf 0}\,; 0)=0$ hold and
$v_1(q')\neq 0$ for some $q' \in \Lambda.$ Then the function
$v_1\in C({\Bbb T}^3)$ is a solution of the equation
\eqref{equation for resonance}, that is, the operator $h({\bf 0})$
has a zero energy resonance.
\end{proof}

Set
$$
\Lambda_0:=\{q' \in \Lambda: v_1(q') \neq 0\}.
$$

\begin{lem}\label{solution of hf=0} If the operator $h({\bf
0})$ has a zero energy resonance, then the vector $f=(f_0,f_1),$
where $f_0$ and $f_1$ are given by \eqref{formula for f-alpha},
obeys the equation $h({\bf 0})f=0$ and $f_1 \in L_1({\Bbb T}^3)
\setminus L_2({\Bbb T}^3).$
\end{lem}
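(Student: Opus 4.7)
The plan is to verify the two claims separately: first that $f$ satisfies $\mathbf{h}(\mathbf{0})f=0$, then that $f_1$ lies in $L_1({\Bbb T}^3) \setminus L_2({\Bbb T}^3)$.

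For the equation, I would substitute $f_0$ and $f_1$ from \eqref{formula for f-alpha} into the two component equations of $\mathbf{h}(\mathbf{0})f=0$. The lower component equation $\frac{1}{\sqrt{2}} v_1(q)f_0 + (l_1+l_2)\varepsilon(q)f_1(q) = 0$ holds by construction. The upper component equation $f_0 + \frac{1}{\sqrt{2}}\int_{{\Bbb T}^3} v_1(t) f_1(t)\,dt = 0$ reduces, after substituting the expression for $f_1$, to the identity
\[
f_0 \Bigl( 1 - \frac{1}{2(l_1+l_2)} \int_{{\Bbb T}^3} \frac{v_1^2(t)\,dt}{\varepsilon(t)} \Bigr) = 0,
\]
which is precisely $f_0 \,\Delta(\mathbf{0};0) = 0$. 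Since the resonance hypothesis combined with Lemma \ref{resonance or eigenvalue}(ii) gives $\Delta(\mathbf{0};0)=0$, the identity holds for arbitrary $f_0$.

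For the membership claims, I would split the integral over ${\Bbb T}^3$ into the disjoint regions $\bigcup_{q'\in\Lambda} U_\delta(q')$ and ${\Bbb T}_\delta$ for a sufficiently small $\delta>0$, just as was done in the proof of Lemma \ref{resonance or eigenvalue}(i). On ${\Bbb T}_\delta$, the estimate \eqref{estimate for epsilon 2} and boundedness of $v_1$ make the contribution harmless for both $L_1$ and $L_2$ norms. Near each $q'\in \Lambda$, I would use \eqref{estimate for epsilon 1} to bound $\varepsilon(q) \geq C_1|q-q'|^2$. For $L_1$-integrability, the boundedness of $v_1$ then yields $|f_1(q)| \leq C|q-q'|^{-2}$ on $U_\delta(q')$, and $\int_{U_\delta(q')} |t-q'|^{-2}\,dt < \infty$ in three dimensions. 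For the failure of $L_2$-integrability, I would use the resonance hypothesis to pick some $q' \in \Lambda_0$ (that is, $v_1(q')\neq 0$), invoke \eqref{estimate for v1-1} to get $|v_1(q)| \geq C_3$ on $U_\delta(q')$, and conclude
\[
\int_{U_\delta(q')} |f_1(t)|^2\,dt \geq C \int_{U_\delta(q')} \frac{dt}{|t-q'|^4} = \infty.
\]

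No step here looks genuinely hard; the only thing requiring a little care is ensuring that the existence of at least one $q'\in\Lambda_0$ (required for the $L_2$-divergence) is supplied by the resonance hypothesis via Lemma \ref{resonance or eigenvalue}(ii), which I would cite explicitly. The argument is essentially a repackaging of the estimates already developed in the proof of Lemma \ref{resonance or eigenvalue}(i), reused here for the $L_1$ bound as well.
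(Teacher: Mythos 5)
Your proposal is correct and follows essentially the same route as the paper: the same splitting of ${\Bbb T}^3$ into the neighborhoods $U_\delta(q')$ and ${\Bbb T}_\delta$, the same use of the estimates \eqref{estimate for v1-1}--\eqref{estimate for epsilon 2}, and the same divergence argument $\int_{U_\delta(q')}|t-q'|^{-4}\,dt=\infty$ near a point $q'\in\Lambda_0$ supplied by Lemma \ref{resonance or eigenvalue}(ii). The only cosmetic differences are that you spell out the verification of ${\bf h}({\bf 0})f=0$ (which the paper declares obvious) and you skip the paper's slightly finer, but unnecessary, separate treatment of $\Lambda\setminus\Lambda_0$ in the $L_1$ bound.
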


\begin{proof}
Since the fact that the vector $f$ defined as in Lemma
\ref{solution of hf=0} satisfies $h({\bf 0})f=0$ is obvious, we
show that $f_1 \in L_1({\Bbb T}^3) \setminus L_2({\Bbb T}^3).$

Let the operator $h({\bf 0})$ have a zero energy resonance. Then
by the assertion (ii) of Lemma \ref{resonance or eigenvalue} we
have $v_1(q')\neq 0$ for some $q' \in \Lambda.$ Using the
estimates \eqref{estimate for v1}--\eqref{estimate for epsilon 2}
we have
\begin{align*}
\int_{{\Bbb T}^3} |f_1(t)|^2dt & \geq
\frac{|f_0|^2}{2(l_1+l_2)^2} \int_{U_\delta(q')}
\frac{v_1^2(t)dt}{\varepsilon^2(t)} \geq C_2 \int_{U_\delta(q')} \frac{dt}{|t-q'|^4}=\infty;\\
\int_{{\Bbb T}^3} |f_1(t)|dt &= \frac{
|f_0|}{\sqrt{2}(l_1+l_2)} \Bigl( \sum_{q' \in \Lambda_0}\,
\int_{U_\delta(q')} \frac{|v_1(t)|dt}{\varepsilon(t)}+
\sum_{q' \in \Lambda \setminus \Lambda_0}\,
\int_{U_\delta(q')} \frac{|v_1(t)|dt}{\varepsilon(t)}+
\int_{{\Bbb T}_\delta}
\frac{|v_1(t)|dt}{\varepsilon(t)} \Bigr) \\
& \leq C_1 \sum_{q' \in \Lambda_0}\, \int_{U_\delta(q')}
\frac{dt}{|t-q'|^2}+C_2 \sum_{q' \in \Lambda \setminus
\Lambda_0}\, \int_{U_\delta(q')}
\frac{dt}{|t-q'|^{2-\alpha}}+C_3<\infty.
\end{align*}
Therefore, $f_1 \in L_1({\Bbb T}^3) \setminus L_2({\Bbb T}^3).$
\end{proof}

\begin{lem}\label{positivity}
If the operator $h({\bf 0})$ has either a zero energy resonance or
a zero eigenvalue, then for any $K \in \Lambda$ and $p \in {\Bbb
T}^3$ the operator ${\bf h}(K-p)+l_1 \varepsilon(p) {\bf I}$ is
non-negative.
\end{lem}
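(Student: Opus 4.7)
The plan is to reduce the operator-theoretic non-negativity to a scalar inequality on the Fredholm determinant $\Delta$, then establish that inequality using the standing hypothesis.

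First, since $K \in \Lambda$, the $\Lambda$-periodicity and evenness of $\varepsilon$ give $\varepsilon(K-p) = \varepsilon(p)$ and $\varepsilon(K-p-t) = \varepsilon(p+t)$; consequently
\[
w_1(K;p) = (l_1+l_2)\varepsilon(p) + 1 > 0, \qquad w_2(K;p,q) = l_1\varepsilon(p) + l_1\varepsilon(q) + l_2\varepsilon(p+q) \geq 0,
\]
with $w_2(K;p,\cdot)$ vanishing only on a set of measure zero (namely on $\Lambda$, and only when $p \in \Lambda$). Hence the $(2,2)$-block $h_{11}(K-p) + l_1\varepsilon(p){\bf I}$ acts as a non-negative multiplication operator, invertible on a dense subspace. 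Taking the Schur complement with respect to this block---equivalently, completing the square in $f_0$ in the quadratic form and bounding the cross term by Cauchy--Schwarz $|\int v_1 f_1|^2 \leq (\int v_1^2/w_2)(\int w_2|f_1|^2)$---shows that ${\bf h}(K-p) + l_1\varepsilon(p){\bf I} \geq 0$ is equivalent to the scalar inequality
\[
\Delta(K-p; -l_1\varepsilon(p)) = w_1(K;p) - \tfrac{1}{2}\int_{{\Bbb T}^3}\frac{v_1^2(t)\,dt}{w_2(K;p,t)} \geq 0.
\]

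Next, Lemma \ref{resonance or eigenvalue} combined with the hypothesis yields $\Delta({\bf 0}; 0) = 0$, i.e.\ $\int v_1^2(t)/\varepsilon(t)\,dt = 2(l_1+l_2)$. Subtracting this identity from the expression for $\Delta(K-p; -l_1\varepsilon(p))$ and putting the result over a common denominator gives
\[
\Delta(K-p; -l_1\varepsilon(p)) = (l_1+l_2)\varepsilon(p) + \tfrac{1}{2}\int_{{\Bbb T}^3} \frac{v_1^2(t)\bigl[l_1\varepsilon(p) + l_2(\varepsilon(p+t) - \varepsilon(t))\bigr]}{(l_1+l_2)\varepsilon(t)\,w_2(K;p,t)}\,dt.
\]
The kinetic term $(l_1+l_2)\varepsilon(p)$ is non-negative. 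The sign-indefinite factor $\varepsilon(p+t) - \varepsilon(t)$ is then controlled by symmetrization: the substitution $t \mapsto -t$ is measure-preserving on ${\Bbb T}^3$ and leaves $v_1^2(t)$ and $\varepsilon(t)$ invariant (by their even parity), while replacing $\varepsilon(p+t)$ with $\varepsilon(p-t)$ and $w_2(K;p,t)$ with $w_2^-(p,t) := l_1\varepsilon(p) + l_1\varepsilon(t) + l_2\varepsilon(p-t)$. Averaging the original integral with its image under this substitution, and employing the identities $\varepsilon(p+t) + \varepsilon(p-t) = 2(\varepsilon(p)+\varepsilon(t)) - 2\sum_{i=1}^3 (1-\cos np^{(i)})(1-\cos nt^{(i)})$ and $\varepsilon(p+t) - \varepsilon(p-t) = 2\sum_i \sin np^{(i)}\sin nt^{(i)}$, recasts the integrand in a factored form $AB - l_2^2\nu^2$ amenable to a Cauchy--Schwarz bound.

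The main obstacle is this final Cauchy--Schwarz step: the symmetrized integrand is not pointwise non-negative in general (its sign depends on the ratio $l_2/l_1$ and on the location of $t$ relative to $\Lambda$), so one must combine an integrated estimate with the $(l_1+l_2)\varepsilon(p)$ kinetic term to dominate the possibly large contributions from regions where $\varepsilon(p+t)$ is small. The $C^2$-smoothness and even/odd parity of $v_1$ (built into the standing assumptions on the Friedrichs model) ensure the requisite integrability near the zero set $\Lambda$ of $\varepsilon$ to close the estimate.
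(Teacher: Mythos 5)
Your reduction is sound and is essentially the right first move: for $K\in\Lambda$ one indeed has $\varepsilon(K-p)=\varepsilon(p)$ and $\varepsilon(K-p-t)=\varepsilon(p+t)$, so the diagonal entries of ${\bf h}(K-p)+l_1\varepsilon(p){\bf I}$ become $w_1(K;p)$ and $w_2(K;p,\cdot)\ge 0$, and the Schur-complement/Cauchy--Schwarz argument correctly shows that non-negativity of the operator is equivalent to the scalar inequality $\Delta(K-p;-l_1\varepsilon(p))\ge 0$ for all $p\in{\Bbb T}^3$. Your algebra after subtracting the identity $\Delta({\bf 0};0)=0$ is also correct, as are the two trigonometric identities for $\varepsilon(p+t)\pm\varepsilon(p-t)$.

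The problem is that the proof stops exactly where the lemma's actual content begins. The inequality $(l_1+l_2)\varepsilon(p)+\tfrac12\int v_1^2(t)\bigl[l_1\varepsilon(p)+l_2(\varepsilon(p+t)-\varepsilon(t))\bigr]\bigl((l_1+l_2)\varepsilon(t)\,w_2(K;p,t)\bigr)^{-1}dt\ge 0$ is never established: you symmetrize in $t\mapsto -t$, announce a factorization ``$AB-l_2^2\nu^2$ amenable to a Cauchy--Schwarz bound,'' and then explicitly concede that the symmetrized integrand is not pointwise non-negative and that one ``must combine an integrated estimate with the kinetic term'' --- without producing that estimate. As written this is a description of a hoped-for strategy, not a proof, and it is precisely the delicate step (the sign of the integrand genuinely depends on $l_2/l_1$ and on the region of $t$, so no pointwise argument can work). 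Note that the paper itself does not prove the lemma either; it defers to \cite{ALR1}, where the corresponding statement is obtained by showing that the function $p\mapsto\Delta(K-p;-l_1\varepsilon(p))$ attains its \emph{global minimum} at the points of $\Lambda$, where it equals $\Delta({\bf 0};0)=0$ --- a fact this paper also invokes in the proofs of Lemmas \ref{estimate for Delta} and \ref{estimate-regular point}. The cleanest way to close your gap is to prove that minimality statement (e.g.\ via the positive-definiteness structure of $t\mapsto e^{-s\varepsilon(p+t)}$ after writing $1/w_2$ as a Laplace transform), rather than the ad hoc symmetrization you sketch.
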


Similar lemma were proved in \cite{ALR1} and we refer to this
paper for the proof.

Now we formulate a lemma (zero energy expansion for the Fredholm
determinant, leading to behaviors of the zero energy resonance),
which is important in the proof of Theorem \ref{THM 2.3}, that is,
the asymptotics \eqref{2.2}.

\begin{lem}\label{LEM 4.1} Let the operator $h({\bf 0})$ have a zero energy
resonance and $K, p' \in \Lambda.$ Then the following
decomposition
\begin{align*}
\Delta(K-p\,; z-l_1 \varepsilon(p))&=\frac{2
\pi^2}{n^2(l_1+l_2)^{3/2}} \Bigl( \sum_{q' \in \Lambda_0}
v_1^2(q') \Bigr) \sqrt{\frac{l_1^2+2 l_1 l_2}
{l_1+l_2}|p-p'|^2-\frac{2z}{n^2}}\\
& +O(|p-p'|^2)+O(|z|)
\end{align*}
holds for $|p-p'|\to 0$ and $z\to -0.$
\end{lem}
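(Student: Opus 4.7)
The plan is to reduce $\Delta(K-p;\,z-l_1\varepsilon(p))$ to the Fredholm integral
$\Phi(p,z):=\tfrac{1}{2}\int_{{\Bbb T}^3}v_1^2(t)(w_2(K;p,t)-z)^{-1}\,dt$ and extract its leading singular behaviour by a Morse-type analysis localised at the points of $\Lambda_0$.

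First I would use $E_{K-p}(t)+l_1\varepsilon(p)=w_2(K;p,t)$ together with the $\Lambda$-periodicity of $\varepsilon$ (so that $\varepsilon(p)=\varepsilon(K-p)=\varepsilon(p-p')=\tfrac{n^2}{2}|p-p'|^2+O(|p-p'|^4)$) to obtain
$$\Delta(K-p;\,z-l_1\varepsilon(p))=(l_1+l_2)\varepsilon(p-p')-z+1-\Phi(p,z).$$
The first two terms are already $O(|p-p'|^2)+O(|z|)$. By the definition of zero energy resonance and Lemma~\ref{resonance or eigenvalue}(ii), $\Delta({\bf 0};0)=0$ translates into $\Phi(p',0)=1$, so the problem reduces to expanding $1-\Phi(p,z)$ as $(p,z)\to(p',0)$ with $z<0$.

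Next I would split ${\Bbb T}^3=\bigcup_{q'\in\Lambda}U_\delta(q')\cup{\Bbb T}_\delta$ and call \emph{smooth part} the sum $\Phi_{\rm reg}+\sum_{q'\in\Lambda\setminus\Lambda_0}\Phi_{q'}+\sum_{q'\in\Lambda_0}J_{q'}$, where $\Phi_{q'}$ is the contribution of $U_\delta(q')$ and $J_{q'}$ is the subleading piece obtained by replacing $v_1^2(t)$ with $v_1^2(t)-v_1^2(q')$. For $q'\in\Lambda\setminus\Lambda_0$ the integrand of $\Phi_{q'}$ is bounded thanks to \eqref{estimate for v1}, and for $q'\in\Lambda_0$ the integrand of $J_{q'}$ is $O(|t-q'|^{-1})$; in both cases differentiation under the integral is legitimate and produces at $(p',0)$ an integrand proportional to $v_1^2(t)\sin(nt^{(i)})/\varepsilon(t)^2$, respectively $[v_1^2(t)-v_1^2(q')]\sin(nt^{(i)})/\varepsilon(t)^2$. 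Since $v_1$ is even or odd in each variable, $v_1^2$ and $\varepsilon$ are even, and the relevant domains are symmetric under $t^{(i)}\mapsto-t^{(i)}$ (with the pairing $q'\leftrightarrow R_i q'$ inside $\Lambda_0$ taking care of the $v_1^2(q')$ term), so every first partial derivative in $p$ vanishes. Hence the smooth part contributes $O(|p-p'|^2)+O(|z|)$.

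The singular contribution is $\tfrac{1}{2}\sum_{q'\in\Lambda_0}v_1^2(q')\,I(v,z)$, where, using $\varepsilon(q'+u)=\varepsilon(u)$ and $\varepsilon(K-p-q'-u)=\varepsilon(v+u)$ with $v:=p-p'$,
$$I(v,z)=\int_{U_\delta(0)}\frac{du}{l_1\varepsilon(v)+l_1\varepsilon(u)+l_2\varepsilon(v+u)-z}.$$
Replacing each $\varepsilon$ by its quadratic Morse form $\tfrac{n^2}{2}|\cdot|^2$ introduces an error whose change between $(v,z)$ and $(0,0)$ is $O(|v|^2+|z|)$, verified by a direct spherical-coordinate bound on $\int|u|^4/D_0^2\,du$. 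Completing the square via $w=u+\tfrac{l_2}{l_1+l_2}v$ reduces the quadratic integral to $\int_{|w|<\delta}(A|w|^2+R)^{-1}\,dw$ with $A=\tfrac{n^2(l_1+l_2)}{2}$ and $R=\tfrac{n^2(l_1^2+2l_1l_2)}{2(l_1+l_2)}|v|^2-z$; the discrepancy from the shifted ball $U_\delta(\tfrac{l_2}{l_1+l_2}v)$ versus $U_\delta(0)$ is $O(|v|^2)$ because its first variation integrates $\hat n$ over $\partial U_\delta(0)$ and vanishes by spherical symmetry. Spherical coordinates then give $4\pi\delta/A-2\pi^2\sqrt{R}/A^{3/2}+O(R)$. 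Inserting $\sqrt{R}=(n/\sqrt{2})\sqrt{\tfrac{l_1^2+2l_1l_2}{l_1+l_2}|p-p'|^2-\tfrac{2z}{n^2}}$ and simplifying $(\pi^2 n/\sqrt{2})/A^{3/2}=2\pi^2/(n^2(l_1+l_2)^{3/2})$ yields exactly the coefficient stated in the lemma.

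The principal obstacle is ensuring that all remainders are truly $O(|p-p'|^2)+O(|z|)$ rather than $O(|p-p'|)+O(\sqrt{|z|})$; an error of the latter size would be of the same order as the extracted $\sqrt{R}$ term. This rests essentially on (a) the parity hypothesis on $v_1$ to annihilate linear-in-$v$ contributions in the smooth part, and (b) the spherical symmetry used to kill the first-order boundary correction in the change of variables.
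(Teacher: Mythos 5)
Your proposal is correct and takes essentially the same route as the paper: the same splitting of the Fredholm integral over the $\delta$-neighborhoods of $\Lambda_0$, of $\Lambda\setminus\Lambda_0$ and over ${\Bbb T}_\delta$, extraction of the square-root singularity from the $\Lambda_0$ pieces after quadratic (Morse) approximation and completion of the square, and cancellation of the constant term via $\Delta({\bf 0}\,;0)=0$. You in fact supply the explicit evaluation of the singular integral and the parity/pairing arguments needed to keep the remainders at $O(|p-p'|^2)+O(|z|)$, details which the paper only quotes from \cite{ALR}.
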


\begin{proof} Let us sketch the main idea of the proof.
Assume the operator $h({\bf 0})$ have a zero energy resonance and
$K, p' \in \Lambda.$ Using the additivity property of the integral
we represent the function $\Delta(K-p\,; z-l_1 \varepsilon(p))$ as
\begin{align}
\Delta(K-p\,; z-l_1 \varepsilon(p))&=w_1(K;p)-z-\frac{1}{2}
\sum\limits_{q' \in \Lambda_0}\, \int_{U_\delta(q')}
\frac{v_1^2(t)dt}{w_2(K; p,t)-z}\nonumber\\
&-\frac{1}{2} \sum\limits_{q' \in \Lambda \setminus \Lambda_0}\,
\int_{U_\delta(q')} \frac{v_1^2(t)dt}{w_2(K; p,t)-z}-\frac{1}{2} \int_{{\Bbb T}_\delta}
\frac{v_1^2(t)dt}{w_2(K; p,t)-z},\label{4.1}
\end{align}
where $\delta>0$ is a sufficiently small number.

Since the function $w_2(K; \cdot, \cdot)$ has non-degenerate
minimum at the points $(p', q'),$ $p',q' \in \Lambda,$ analysis
similar to \cite{ALR} show that
\begin{align*}
\int_{U_\delta(q')} \frac{v_1^2(t)dt}{w_2(K; p,t)-z}= &
\int_{U_\delta(q')} \frac{v_1^2(t)dt}{w_2(K; p',t)}-\frac{4 \pi^2 v_1^2(q')}{n^2 (l_1+l_2)^{3/2}}
\sqrt{\frac{l_1^2+2 l_1 l_2} {l_1+l_2}|p-p'|^2-\frac{2z}{n^2}}\\
& + O(|p-p'|^2)+O(|z|),\quad q' \in \Lambda_0;\\
\int_{U_\delta(q')} \frac{v_1^2(t)dt}{w_2(K; p,t)-z}= &
\int_{U_\delta(q')} \frac{v_1^2(t)dt}{w_2(K; p',t)}+O(|p-p'|^2)+O(|z|),\quad q' \in \Lambda \setminus \Lambda_0;\\
\int_{{\Bbb T}_\delta} \frac{v_1^2(t)dt}{w_2(K; p,t)-z}= &
\int_{{\Bbb T}_\delta} \frac{v_1^2(t)dt}{w_2(K; p',t)}+
O(|p-p'|^2)+O(|z|)
\end{align*}
as $|p-p'|\to 0$ and $z \to -0.$ Here we remind that $v_1(q')=0$
for all $q' \in \Lambda \setminus \Lambda_0$ and hence by the
estimate \eqref{estimate for v1} we have $v_1(q)=O(|q-q'|^\alpha)$
as $|q-q'| \to 0$ for some $\alpha \geq 1.$ Now substituting the
last three expressions and the the expansion
$$
w_1(K;p)=1+\frac{(l_1+l_2)n^2}{2} |p-p'|^2+O(|p-p'|^4)
$$
as $|p-p'|\to 0,$ to the equality \eqref{4.1} we obtain
\begin{align*}
\Delta(K-p\,; z-l_1 \varepsilon(p)) & =\Delta({\bf 0}\,; 0)+
\frac{2 \pi^2}{n^2(l_1+l_2)^{3/2}} \Bigl( \sum_{q' \in \Lambda_0}
v_1^2(q') \Bigr) \sqrt{\frac{l_1^2+2 l_1 l_2}
{l_1+l_2}|p-p'|^2-\frac{2z}{n^2}}\\
& +O(|p-p'|^2)+O(|z|)
\end{align*}
as $|p-p'|\to 0$ and $z\to -0.$ Since the operator $h({\bf 0})$
has a zero energy resonance by the assertion (ii) of Lemma
\ref{resonance or eigenvalue} we have the equality $\Delta({\bf
0}\,; 0)=0,$ which completes the proof of the Lemma \ref{LEM 4.1}.
\end{proof}

\begin{cor}\label{ineq for Delta1}
Let the operator $h({\bf 0},{\bf 0})$ have a zero energy resonance
and $K \in \Lambda.$ Then there exist the numbers $C_1, C_2,
C_3>0$ and $\delta>0$ such that\\
{\rm (i)} $C_1 |p-p'| \leq \Delta(K-p\,; -l_1 \varepsilon(p))
\leq C_2 |p-p'|,$ $p \in U_\delta(p'),$ $p' \in \Lambda;$\\
{\rm (ii)} $\Delta(K-p\,; -l_1 \varepsilon(p)) \geq C_3,$ $p \in
{\Bbb T}_\delta.$
\end{cor}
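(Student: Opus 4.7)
The plan is to split the corollary into its two parts and derive each from the tools already at hand. Part (i) is essentially a corollary of Lemma \ref{LEM 4.1}, while part (ii) requires a separate compactness-plus-positivity argument relying on Lemma \ref{positivity} and Lemma \ref{LEM 3.1}.

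For part (i), I would substitute $z=0$ directly into the expansion of Lemma \ref{LEM 4.1}. The term $-2z/n^2$ inside the square root vanishes, and the expansion collapses to
\begin{equation*}
\Delta(K-p\,; -l_1\varepsilon(p)) = \mathcal{C}\,|p-p'| + O(|p-p'|^2), \qquad p \to p',
\end{equation*}
with
\begin{equation*}
\mathcal{C} := \frac{2\pi^2 \sqrt{l_1^2 + 2 l_1 l_2}}{n^2 (l_1+l_2)^2}\sum_{q' \in \Lambda_0} v_1^2(q').
\end{equation*}
The zero energy resonance assumption, together with Lemma \ref{resonance or eigenvalue}(ii), guarantees $\Lambda_0 \neq \emptyset$, so $\mathcal{C}>0$. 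Choose $\delta>0$ small enough that on $U_\delta(p')$ the remainder is bounded in absolute value by $(\mathcal{C}/2)|p-p'|$ (possible because $|p-p'|^2 \le \delta|p-p'|$). The two-sided bound then holds with $C_1 = \mathcal{C}/2$ and $C_2 = 3\mathcal{C}/2$, uniformly over the finite set $\Lambda$.

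For part (ii), set $g(p) := \Delta(K-p\,; -l_1\varepsilon(p))$. My first task is to verify that $g$ is continuous on the compact set $\mathbb{T}_\delta$. The only non-trivial input is continuity of the integral term $\int_{\mathbb{T}^3} v_1^2(t)\,[w_2(K;p,t)+l_1\varepsilon(p)]^{-1}dt$; for $p \in \mathbb{T}_\delta$ the denominator equals $l_1\varepsilon(t)+l_2\varepsilon(K-p-t)$, whose degenerate zeros occur only when $(p,t) \in \Lambda\times\Lambda$, and this configuration is excluded once $p\notin\bigcup_{q'\in\Lambda}U_\delta(q')$, yielding a uniformly integrable majorant and hence continuity by dominated convergence.

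Next I would show $g(p)>0$ on $\mathbb{T}_\delta$. By Lemma \ref{positivity} the operator $\mathbf{h}(K-p)+l_1\varepsilon(p)\mathbf{I}$ is non-negative, so $z_p := -l_1\varepsilon(p) \le \min\sigma(\mathbf{h}(K-p))$. Together with Lemma \ref{LEM 3.1} and the fact that $\Delta(K-p;z)\to+\infty$ as $z\to-\infty$, this forces $g(p)\ge 0$, with equality only if $z_p$ is a threshold eigenvalue or resonance of $\mathbf{h}(K-p)$. Repeating the argument of Lemma \ref{resonance or eigenvalue} with $\mathbf{h}(K-p)$ in place of $\mathbf{h}(\mathbf{0})$ shows that such a threshold configuration forces $K-p \in \Lambda$, i.e.\ $p\in\Lambda$ (since $K\in\Lambda$ and $\Lambda$ is closed under the group operation on $\mathbb{T}^3$ inherited from the equalities $\mathbf{h}(\mathbf{0})\equiv \mathbf{h}(k)$ for $k\in\Lambda$). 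Since $\mathbb{T}_\delta\cap\Lambda=\emptyset$, we conclude $g(p)>0$ on $\mathbb{T}_\delta$, and by compactness $C_3 := \min_{p\in\mathbb{T}_\delta} g(p)>0$.

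I expect the main obstacle to be the strict positivity step in (ii), namely ruling out zeros of $g$ on $\mathbb{T}_\delta$. The key observation that makes it work is that Lemma \ref{positivity} upgrades a mere non-vanishing statement into $g\ge 0$, so it suffices to exclude equality; and equality reduces, via Lemma \ref{LEM 3.1}, to a threshold condition on $\mathbf{h}(K-p)$ that is available only at the points $p\in\Lambda$ (precisely the points we have removed to form $\mathbb{T}_\delta$). Once this logical chain is in place, continuity plus compactness of $\mathbb{T}_\delta$ finishes the proof.
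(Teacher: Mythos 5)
Your part (i) is exactly the paper's argument: set $z=0$ in the expansion of Lemma \ref{LEM 4.1}, observe that the leading coefficient is positive because the zero energy resonance forces $\Lambda_0\neq\emptyset$ (Lemma \ref{resonance or eigenvalue}(ii)), and absorb the quadratic remainder by shrinking $\delta$. That part is correct. For part (ii) your skeleton (continuity of $g(p)=\Delta(K-p;-l_1\varepsilon(p))$ on the compact set ${\Bbb T}_\delta$, then strict positivity, then take the minimum) is also the paper's route, and your derivation of $g(p)\ge 0$ from Lemma \ref{positivity} together with Lemma \ref{LEM 3.1} and $\Delta(K-p;z)\to+\infty$ as $z\to-\infty$ is sound.

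The gap is in the step where you rule out $g(p)=0$ for $p\in{\Bbb T}_\delta$. You assert that equality would make $-l_1\varepsilon(p)$ a ``threshold eigenvalue or resonance'' of ${\bf h}(K-p)$ and that ``repeating the argument of Lemma \ref{resonance or eigenvalue}'' forces $K-p\in\Lambda$. Neither claim holds: for $p\in{\Bbb T}_\delta$ one has $\varepsilon(p)\ge C>0$, so $-l_1\varepsilon(p)<0\le E_{\rm min}(K-p)$ lies \emph{strictly below} the essential spectrum of ${\bf h}(K-p)$; a zero of $\Delta(K-p;\cdot)$ there is therefore an ordinary discrete eigenvalue, not a threshold object, and it would simply mean that the non-negative operator ${\bf h}(K-p)+l_1\varepsilon(p){\bf I}$ has $0$ as its lowest eigenvalue --- a configuration that Lemma \ref{positivity} does not exclude. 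Lemma \ref{resonance or eigenvalue} concerns only the threshold $z=0$ of ${\bf h}({\bf 0})$ and cannot be ``repeated'' for an eigenvalue of ${\bf h}(K-p)$ sitting at $-l_1\varepsilon(p)$; nor does the subgroup structure of $\Lambda$ enter. So strict positivity on ${\Bbb T}_\delta$ is not established by your argument. (For what it is worth, the paper's own proof is laconic here: it invokes ``positivity and continuity'' on ${\Bbb T}_\delta$, with the positivity resting on the fact, cited from [ALR1, Lemma 3.4] and used again in Lemmas \ref{estimate for Delta} and \ref{estimate-regular point}, that $p\mapsto\Delta(K-p;-l_1\varepsilon(p))$ attains its minimum precisely at the points of $\Lambda$; this is the ingredient your proof would need to import or reprove.)
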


\begin{proof}
Lemma \ref{LEM 4.1} yields the assertion (i) for some positive
numbers $C_1, C_2.$ The positivity and continuity of the function
$\Delta(K-p\,; -l_1 \varepsilon(p))$ on the compact set ${\Bbb
T}_\delta$ imply the assertion (ii).
\end{proof}

\begin{lem}\label{estimate for Delta}
Let the operator $h({\bf 0})$ have a zero eigenvalue and $K \in
\Lambda.$ Then there exist the numbers $C_1, C_2,
C_3>0$ and $\delta>0$ such that\\
{\rm (i)} $C_1 |p-p'|^2 \leq \Delta(K-p\,; -l_1 \varepsilon(p))
\leq C_2 |p-p'|^2,$ $p \in U_\delta(p'),$ $p' \in \Lambda;$\\
{\rm (ii)} $\Delta(K-p\,; -l_1 \varepsilon(p)) \geq C_3,$ $p \in
{\Bbb T}_\delta.$
\end{lem}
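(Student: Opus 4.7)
The plan is to parallel the proof of Lemma \ref{LEM 4.1} and Corollary \ref{ineq for Delta1}, but to extract the next-order term in $|p-p'|$ beyond the (now vanishing) square-root contribution. By Lemma \ref{resonance or eigenvalue}(i) the zero-eigenvalue hypothesis forces $v_1(q')=0$ for every $q'\in\Lambda$, i.e.\ $\Lambda_0=\emptyset$, so the coefficient of the square root in the expansion of Lemma \ref{LEM 4.1} vanishes. Setting $z=0$ in that expansion already yields the upper bound $\Delta(K-p;-l_1\varepsilon(p))\le C_2|p-p'|^2$ in assertion (i).

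For the lower bound I first reformulate $\Delta(K-p;-l_1\varepsilon(p))$ as a function of $u:=p-p'$ alone. Because $K,p'\in\Lambda$ implies that $nK^{(i)}$ and $n(p')^{(i)}$ are integer multiples of $2\pi$, the $2\pi$-periodicity of $\cos$ gives $w_1(K;p'+u)=1+(l_1+l_2)\varepsilon(u)$ and $w_2(K;p'+u,t)=l_1\varepsilon(u)+l_1\varepsilon(t)+l_2\varepsilon(u+t)$, both independent of $K$ and $p'$. Setting
\[
\varphi(u):=\Delta\bigl(K-(p'+u);-l_1\varepsilon(p'+u)\bigr)=1+(l_1+l_2)\varepsilon(u)-\tfrac{1}{2}\!\int_{{\Bbb T}^3}\!\frac{v_1^2(t)\,dt}{l_1\varepsilon(u)+l_1\varepsilon(t)+l_2\varepsilon(u+t)},
\]
the identity $\Delta({\bf 0};0)=0$ becomes $\varphi(0)=0$. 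Since $v_1$ vanishes on $\Lambda$ to order $\alpha\ge 1$, the integrand is twice continuously differentiable in $u$ at $u=0$, and Taylor expansion yields $\varphi(u)=Q(u)+O(|u|^4)$ for a quadratic form $Q$.

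The crucial step is to show $Q(u)\ge C_1|u|^2$. The assumption that $v_1$ is even or odd in each variable, combined with the evenness of $\varepsilon$ in each variable, makes $\varphi$ invariant under each coordinate sign flip $u^{(i)}\mapsto -u^{(i)}$ and under permutation of coordinates, so $Q(u)=c|u|^2$ for some $c\in{\Bbb R}$. By Lemma \ref{positivity} the operator ${\bf h}(K-p)+l_1\varepsilon(p){\bf I}$ is non-negative, with $0$ in its spectrum only at $p\in\Lambda$; since the Fredholm determinant $\Delta(k;\cdot)$ is strictly positive below $\min\sigma({\bf h}(k))$, we obtain $\varphi\ge 0$ with isolated zero at $u=0$, and hence $c\ge 0$. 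To upgrade to $c>0$ I would use that $0$ is a \emph{simple} eigenvalue of ${\bf h}({\bf 0})$, its eigenvector being uniquely given by \eqref{formula for f-alpha}: the implicit function theorem applied to $\Delta$ produces an analytic branch $\lambda(u)$ of eigenvalues of ${\bf h}(K-p'-u)$ with $\lambda(0)=0$, together with a factorization $\Delta(K-p'-u;z)=g(u,z)\bigl(\lambda(u)-z\bigr)$ near $(u,z)=(0,0)$ with $g(0,0)>0$; then $\varphi(u)=g(u,-l_1\varepsilon(u))\cdot(\lambda(u)+l_1\varepsilon(u))$, and $c>0$ reduces to showing that the Hessian of $\lambda(u)+l_1\varepsilon(u)$ at $u=0$ is positive definite. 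A second-order perturbative computation for $\lambda(u)$, in conjunction with the identity $\int_{{\Bbb T}^3}v_1^2(t)/\varepsilon(t)\,dt=2(l_1+l_2)$ extracted from $\Delta({\bf 0};0)=0$, delivers the strict positivity.

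Assertion (ii) follows exactly as in Corollary \ref{ineq for Delta1}: $\Delta(K-p;-l_1\varepsilon(p))$ is continuous on the compact set ${\Bbb T}_\delta$ and, by the positivity discussion above, strictly positive there, hence bounded below by some $C_3>0$. The main obstacle throughout is the strict positivity $c>0$: the pointwise non-negativity $\varphi\ge 0$ and the isolated-zero structure alone yield only $c\ge 0$, and it is the spectral-perturbative upgrade above that genuinely separates the zero-eigenvalue case from the resonance case treated in Corollary \ref{ineq for Delta1}.
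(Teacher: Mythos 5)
Your overall strategy coincides with the paper's: reduce everything to showing that $\Delta(\cdot\,;-l_1\varepsilon(\cdot))$ has a \emph{non-degenerate} minimum, with value $\Delta({\bf 0}\,;0)=0$, at each $p'\in\Lambda$, the upper bound in (i) and assertion (ii) being the easy parts. But the decisive step --- strict positive definiteness of the Hessian --- is exactly the point you leave unproven. Your detour through the eigenvalue branch does not buy anything: writing $\Delta(K-p'-u\,;z)=g(u,z)(\lambda(u)-z)$ with $g(0,0)=-\partial_z\Delta({\bf 0}\,;0)>0$ gives $\mathrm{Hess}\,\varphi(0)=g(0,0)\,\mathrm{Hess}\,(\lambda+l_1\varepsilon)(0)$, so positive definiteness of the one Hessian is \emph{equivalent} to that of the other; the ``second-order perturbative computation'' you defer to is therefore the same computation you started with, merely relabelled, and you never carry it out. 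You correctly diagnose that non-negativity plus an isolated zero only give $c\ge 0$, but you do not close the gap to $c>0$.

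The paper closes it by a direct and much shorter calculation: with $I(p):=\int_{{\Bbb T}^3}v_1^2(t)\,dt/w_2({\bf 0};p,t)$ one has $\Delta(K-p\,;-l_1\varepsilon(p))=w_1({\bf 0};p)-\tfrac12 I(p)$, the condition $v_1|_\Lambda=0$ makes $I$ twice continuously differentiable, and an explicit differentiation shows $\partial^2 I(p')/\partial p^{(i)}\partial p^{(i)}<0$ while the mixed second derivatives vanish by the evenness of $v_1^2$ in each variable. Hence the diagonal entries of $\mathrm{Hess}\,\Delta(p'\,;0)$ exceed $(l_1+l_2)n^2$: the strictly positive contribution comes for free from $w_1$, and one does not even need strict negativity of $\partial^2 I$. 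Two smaller points: your claim that $Q(u)=c|u|^2$ uses permutation symmetry of the coordinates, which is not among the hypotheses on $v_1$ (only evenness or oddness in each variable is assumed), so $Q$ is merely diagonal with possibly distinct entries --- harmless here but unjustified as stated; and Lemma \ref{LEM 4.1} is proved under the zero-resonance hypothesis, so in the eigenvalue case you must redo its expansion rather than ``set $z=0$'' in it, although with $\Lambda_0=\emptyset$ this is routine.
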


\begin{proof} Let the operator $h({\bf 0})$
have a zero eigenvalue. Then by the assertion (i) of Lemma
\ref{resonance or eigenvalue} we have $v_1(p')=0$ for all $p' \in
\Lambda.$

Let $K \in \Lambda.$ Then $\Delta(K-p\,; -l_1
\varepsilon(p))=\Delta(p\,; -l_1 \varepsilon(p))$ holds for any $p
\in {\Bbb T}^3.$ Proceeding analogously to the proof of Lemma 3.4
of \cite{ALR1} one can show that the function $\Delta(\cdot\,;
-l_1 \varepsilon(\cdot))$ has minimum at the points $p=p'\in
\Lambda.$ Here we prove that this function has non-degenerate
minimum at the points $p=p' \in \Lambda.$ Since the function
$w_2({\bf 0}; \cdot, \cdot)$ is positive on $({\Bbb T}^3 \setminus
\Lambda) \times {\Bbb T}^3$ the integrals
$$
\lambda_{ij}^{(1)}(p):=\int_{{\Bbb T}^3} \left(
\frac{\partial^2 w_2({\bf 0}; p,t)}{\partial p^{(i)} \partial
p^{(j)}} \right) \frac{v_1^2(t)dt}{(w_2({\bf 0}; p,t))^2},\quad
i,j=1,2,3
$$
and
$$
\lambda_{ij}^{(2)}(p):=\int_{{\Bbb T}^3} \left(
\frac{\partial w_2({\bf 0}; p,t)}{\partial p^{(i)}} \frac{\partial
w_2({\bf 0}; p,t)}{\partial p^{(j)}} \right)
\frac{v_1^2(t)dt}{(w_2({\bf 0}; p,t))^3},\quad i,j=1,2,3
$$
are finite for any $p \in {\Bbb T}^3 \setminus \Lambda.$ The
condition $v_1(p')=0$ for all $p' \in \Lambda$ implies finiteness
of these integrals at the points of $\Lambda.$ Thus the functions
$\lambda_{ij}^{(l)}(\cdot),$ $l=1,2$ are continuous on ${\Bbb
T}^3.$

We define the function $I(\cdot)$ on ${\Bbb T}^3$ by
$$
I(p):=\int_{{\Bbb T}^3}\frac{v_1^2(t)dt}{w_2({\bf 0}; p,t)}.
$$

The function $I(\cdot)$ is a twice continuously differentiable
function ${\Bbb T}^3$ and
$$
\frac{\partial^2 I(p)}{\partial p^{(i)} \partial p^{(j)}}= -
\lambda_{ij}^{(1)}(p)+2\lambda_{ij}^{(2)}(p),\quad i,j=1,2,3.
$$

Simple calculations shows that
\begin{align*}
& \frac{\partial w_2({\bf 0}; p,q)}{\partial p^{(i)}}=n \left[ l_1
\sin(n q^{(i)})+l_2 \sin(n(p^{(i)}+q^{(i)})) \right], \quad
i=1,2,3;\\
& \frac{\partial^2 w_2({\bf 0}; p,q)}{\partial p^{(i)} \partial
p^{(i)}}=n^2\left[ l_1 \cos(n q^{(i)})+l_2
\cos(n(p^{(i)}+q^{(i)})) \right], \quad
i=1,2,3;\\
& \frac{\partial^2 w_2({\bf 0}; p,q)}{\partial p^{(i)} \partial
p^{(j)}}=0, \quad i \neq j, \quad i,j=1,2,3
\end{align*}
and hence for $p' \in \Lambda$ we obtain
\begin{align*}
& \frac{\partial^2 I(p')}{\partial p^{(i)} \partial
p^{(i)}}=-\frac{n^2}{4} \int_{{\Bbb T}^3} \left(
\sum\limits_{{l=1,\,l \neq i}}^3 (1-\cos(n t^{(l)})) \right)
\frac{(1+\cos(n t^{(i)})) v_1^2(t)}{\varepsilon^3(t)}dt, \,\,
i=1,2,3;\\
& \frac{\partial^2 I(p')}{\partial p^{(i)} \partial
p^{(j)}}=\frac{n^2}{4} \int_{{\Bbb T}^3} \frac{\sin(n
t^{(i)}) \sin(n t^{(j)}) v_1^2(t)}{\varepsilon^3(t)}dt, \quad i
\neq j, \quad i,j=1,2,3.
\end{align*}

The last equalities and the evenness of $v_2^2(\cdot)$ on each
variables imply
$$
\frac{\partial^2 I(p')}{\partial p^{(i)} \partial p^{(i)}}<0,
\quad \frac{\partial^2 I(p')}{\partial p^{(i)}
\partial p^{(j)}}=0, \quad i \neq j, \quad
i,j=1,2,3
$$
for $p' \in \Lambda.$ Since
$$
\frac{\partial^2 w_1({\bf 0};p')}{\partial p^{(i)} \partial
p^{(i)}}=(l_1+l_2)n^2, \quad \frac{\partial^2 w_1({\bf 0};p')}
{\partial p^{(i)}\partial p^{(j)}}=0, \quad i \neq j, \quad
i,j=1,2,3
$$
for all $p' \in \Lambda$ by definition of $\Delta(\cdot\,; \cdot)$
we have
$$
\frac{\partial^2 \Delta(p'\,; 0)}{\partial p^{(i)} \partial
p^{(i)}}>(l_1+l_2)n^2, \quad \frac{\partial^2 \Delta(p'\,;
0)}{\partial p^{(i)} \partial p^{(j)}}=0, \quad i \neq j, \quad
i,j=1,2,3
$$
for all $p' \in \Lambda.$ Therefore the function $\Delta(\cdot\,;
-l_1 \varepsilon(\cdot))$ has non-degenerate minimum at the points
of $p=p' \in \Lambda.$ This fact completes the proof of lemma.
\end{proof}

\begin{lem}\label{estimate-regular point} Let $K \in \Lambda$ and zero be the regular type point for ${\bf h}({\bf 0})$
with ${\bf h}({\bf 0}) \geq 0.$ Then there exists a positive
number $C_1$ such that the inequality
$$
\Delta(K-p\,; z-l_1 \varepsilon(p)) \geq C_1
$$
holds for any $p \in {\Bbb T}^3$ and $z<0.$
\end{lem}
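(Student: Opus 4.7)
The plan is to reduce the inequality to the boundary value $z=0$ via monotonicity of $\Delta$ in the spectral parameter, and then establish a uniform positive lower bound at $z=0$ by combining the strict positivity $\Delta({\bf 0};0)>0$ with an operator-positivity argument analogous to Lemma \ref{positivity}. Setting $C_1:=\Delta({\bf 0};0)$ at the end will yield the result.

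First, a direct differentiation gives
\[
\frac{\partial}{\partial z}\Delta(K-p;\,z-l_1\varepsilon(p)) \;=\; -1 - \frac{1}{2}\int_{{\Bbb T}^3}\frac{v_1^2(t)\,dt}{(w_2(K;p,t)-z)^2} \;<\; 0,
\]
so $\Delta(K-p;\,z-l_1\varepsilon(p))\geq \Delta(K-p;\,-l_1\varepsilon(p))$ for every $z\leq 0$, reducing the task to finding a uniform lower bound $C_1>0$ on $\Delta(K-p;\,-l_1\varepsilon(p))$ over $p\in{\Bbb T}^3$. I would then verify that $\Delta({\bf 0};0)>0$: by Lemma \ref{resonance or eigenvalue} the regularity forces $\Delta({\bf 0};0)\neq 0$, while the non-negativity ${\bf h}({\bf 0})\geq 0$ together with Lemma \ref{LEM 3.1} yields $\Delta({\bf 0};z)\neq 0$ for $z<0$; combining this with the monotone decrease of $\Delta({\bf 0};\cdot)$ and the asymptotic $\Delta({\bf 0};z)\to+\infty$ as $z\to-\infty$, we obtain $\Delta({\bf 0};0)>0$.

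The heart of the argument is the pointwise inequality $\Delta(K-p;\,-l_1\varepsilon(p))\geq \Delta({\bf 0};0)$ for all $p\in{\Bbb T}^3$. For $p\in\Lambda$ this is immediate: $K-p\in\Lambda$ together with the identity $\varepsilon(K-x)=\varepsilon(x)$ valid for $K\in\Lambda$ gives $\varepsilon(p)=0$ and ${\bf h}(K-p)={\bf h}({\bf 0})$, so both sides agree. For general $p\in{\Bbb T}^3$ I would replicate the operator-positivity argument of \cite{ALR1} underlying Lemma \ref{positivity}, which translates ${\bf h}({\bf 0})\geq 0$ into ${\bf h}(K-p)+l_1\varepsilon(p){\bf I}\geq 0$; in the regular setting the same Schur-complement identity carries the additional scalar offset $\Delta({\bf 0};0)$ through to the Fredholm determinant of ${\bf h}(K-p)+l_1\varepsilon(p){\bf I}$. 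Continuity of $p\mapsto\Delta(K-p;\,-l_1\varepsilon(p))$ on the compact torus ${\Bbb T}^3$ follows from dominated convergence, since the singularity $|t-q'|^{-2}$ of the integrand at $q'\in\Lambda$ is integrable in three dimensions.

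The main obstacle will be adapting the Lemma \ref{positivity}-type operator inequality from the case $\Delta({\bf 0};0)=0$ (resonance or eigenvalue) to the regular case where $\Delta({\bf 0};0)>0$; since the proof of Lemma \ref{positivity} is referred to \cite{ALR1} rather than reproduced, the Schur-complement bookkeeping must be redone so that the constant $\Delta({\bf 0};0)$ is tracked explicitly. Once this pointwise bound is in hand, compactness of ${\Bbb T}^3$ together with continuity yields the uniform constant $C_1=\Delta({\bf 0};0)>0$, completing the proof.
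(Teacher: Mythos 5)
Your proposal follows essentially the same route as the paper's proof: monotone decrease of $\Delta$ in the spectral parameter reduces everything to $z=0$; the sign $\Delta({\bf 0};0)>0$ is forced by regularity, ${\bf h}({\bf 0})\geq 0$, Lemma \ref{LEM 3.1} and the intermediate value theorem (you correctly have $\Delta({\bf 0};z)\to+\infty$ as $z\to-\infty$, where the paper's text has a sign typo); and the constant is $C_1=\Delta({\bf 0};0)$, the value attained at the points of $\Lambda$. The one step you flag as the main obstacle --- the pointwise bound $\Delta(K-p;-l_1\varepsilon(p))\geq\Delta({\bf 0};0)$ for general $p\in{\Bbb T}^3$ --- is handled in the paper not by any Schur-complement bookkeeping but simply by invoking the fact, already recorded in the proof of Lemma \ref{estimate for Delta} with reference to Lemma 3.4 of \cite{ALR1}, that $\Delta(\cdot\,;-l_1\varepsilon(\cdot))$ attains its minimum exactly on $\Lambda$.
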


\begin{proof}
Let zero be the regular type point of ${\bf h}({\bf 0}),$ that is,
$\Delta({\bf 0}\,; 0) \not=0.$ Assume $\Delta({\bf 0}\,; 0)<0.$
Then $\lim\limits_{z\to -\infty}\Delta({\bf 0}\,; z)=-\infty$ and
the continuity of of the function $\Delta({\bf 0}\,; \dot)$ on
$(-\infty; 0]$ imply that there exists $z_0<0$ such that
$\Delta({\bf 0}\,; z_0)=0.$ In this case by Lemma \ref{LEM 3.1}
the number $z_0$ is an eigenvalue of the operator $h({\bf 0}).$ On
the other hand by the assumption of the lemma we have ${\bf
h}({\bf 0}) \geq 0.$ Therefore the operator $h({\bf 0})$ has no
negative eigenvalues. This contrary gives $\Delta({\bf 0}\,;
0)>0.$

Since for any $K \in \Lambda$ the function $\Delta(K-p\,; -l_1
\varepsilon(p))$ has minimum at the points $p=p'\in \Lambda,$ for
all $p \in {\Bbb T}^3$ and $z<0$ we obtain
$$
\Delta(K-p\,; z-l_1 \varepsilon(p)) > \Delta(K-p\,; -l_1
\varepsilon(p)) \geq \Delta({\bf 0}\,; 0)>0.
$$
Setting $C_1:=\Delta({\bf 0}\,; 0)$ we complete the proof of
lemma.
\end{proof}

\section{The Birman-Schwinger principle.}

For a bounded self-adjoint operator $A$ acting in the Hilbert
space ${\mathcal R},$ we define the number $n(\gamma, A)$ by the
rule
$$
n(\gamma, A):=\sup \{{\rm dim} F: (Au,u)>\gamma,\, u\in F \subset
{\mathcal R},\,||u||=1\}.
$$

The number $n(\gamma, A)$ is equal to the infinity if
$\gamma<\max\sigma_{\rm ess}(A);$ if $n(\gamma, A)$ is finite,
then it is equal to the number of the eigenvalues of $A$ bigger
than $\gamma.$

By the definition of $N(K,z),$ we have
$$
N(K,z)=n(-z, -{\bf H}(K)),\,-z>-\tau_{\rm ess}(K).
$$

Since for any $K \in {\Bbb T}^3$ the function $\Delta(K-p\,; z-l_1
\varepsilon(p))$ is a positive on $(p,z) \in {\Bbb T}^3 \times
(-\infty; \tau_{\rm ess}(K)),$ the positive square root of
$\Delta(K-p\,; z-l_1 \varepsilon(p))$ exists for any $K,p\in {\Bbb
T}^3$ and $z < \tau_{\rm ess}(K).$

In our analysis of the discrete spectrum of ${\bf H}(K),$ $K \in
{\Bbb T}^3$ the crucial role is played by the self-adjoint compact
$2 \times 2$ block operator matrix $\widehat{{\bf T}}(K,z),\,z <
\tau_{\rm ess}(K)$ acting on ${\mathcal H}_0 \oplus {\mathcal
H}_1$ as
$$
\widehat{{\bf T}}(K,z):=\left( \begin{array}{cc}
\widehat{T}_{00}(K,z) & \widehat{T}_{01}(K,z)\\
\widehat{T}_{01}^*(K,z) & \widehat{T}_{11}(K,z)\\
\end{array}
\right)
$$
with the entries
\begin{align*}
& \widehat{T}_{00}(K,z)g_0=(1+z-w_0(K))g_0,\quad
\widehat{T}_{01}(K,z)g_1= -\int_{{\Bbb T}^3}
\frac{v_0(t)g_1(t)dt}{\sqrt{\Delta(K-t\,; z-l_1 \varepsilon(t))}};\\
& (\widehat{T}_{11}(K,z)g_1)(p)=\frac{v_1(p)}{2\sqrt{\Delta(K-p\,;
z-l_1 \varepsilon(p))}} \int_{{\Bbb T}^3}
\frac{v_1(t)g_1(t)dt}{\sqrt{\Delta(K-t\,; z-l_1
\varepsilon(t))}(w_2(K; p,t)-z)}.
\end{align*}

The following lemma is a modification of the well-known
Birman-Schwinger principle for the operator ${\bf H}(K)$ (see
\cite{AL, ALM, ALR, Sob}).

\begin{lem}\label{LEM 4.3}
Let $K \in {\Bbb T}^3.$ The operator $\widehat{{\bf T}}(K,z)$ is
compact and continuous in $z < \tau_{\rm ess}(K)$ and
$$
N(K,z) = n(1, \widehat{{\bf T}}(K,z)).
$$
\end{lem}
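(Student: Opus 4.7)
\noindent\emph{Proof plan.} The lemma combines two assertions: compactness and continuity of $\widehat{{\bf T}}(K,z)$ in $z<\tau_{\rm ess}(K)$, and the Birman--Schwinger counting identity $N(K,z)=n(1,\widehat{{\bf T}}(K,z))$. My plan for the latter (and main) part is to integrate out the two-particle component of an eigenvector of ${\bf H}(K)$, reducing the $3\times 3$ eigenvalue problem to a $2\times 2$ fixed-point equation $\widehat{{\bf T}}(K,z){\bf g}={\bf g}$ on ${\mathcal H}_0\oplus{\mathcal H}_1$, and then invoke the standard Birman--Schwinger monotonicity argument. Compactness and continuity are simpler: for $z<\tau_{\rm ess}(K)$, Theorem \ref{THM 2.1} and Lemma \ref{LEM 3.1} imply that $\Delta(K-p\,;z-l_1\varepsilon(p))$ is strictly positive and continuous on ${\Bbb T}^3$ (hence bounded below), while $w_2(K;p,t)-z\geq\tau_{\rm ess}(K)-z>0$; consequently $\widehat{T}_{00}$ (acting on ${\Bbb C}$), $\widehat{T}_{01}$ (rank one), and $\widehat{T}_{11}$ (an integral operator with continuous bounded kernel on ${\Bbb T}^3\times{\Bbb T}^3$, hence Hilbert--Schmidt) are each compact, and uniform continuity of the kernels in $z$ on compact subintervals of $(-\infty,\tau_{\rm ess}(K))$ gives norm continuity of $\widehat{{\bf T}}(K,z)$.

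For the reduction, starting from ${\bf H}(K)f=zf$ with $f=(f_0,f_1,f_2)\in{\mathcal H}$, the bottom row yields
$$
f_2(p,q)=-\frac{v_1(p)f_1(q)+v_1(q)f_1(p)}{2(w_2(K;p,q)-z)},
$$
where the factor $\tfrac12$ reflects the symmetrised action of $H_{12}^*$ and the denominator is nonvanishing by positivity of $w_2-z$. Substituting into the middle row and recognising that the diagonal contribution reassembles into $\Delta(K-p\,;z-l_1\varepsilon(p))$ reduces the system to a coupled pair of equations in $(f_0,f_1)$. The similarity change of variables $g_0:=f_0$, $g_1(p):=\sqrt{\Delta(K-p\,;z-l_1\varepsilon(p))}\,f_1(p)$---an isomorphism of ${\mathcal H}_0\oplus{\mathcal H}_1$ because $\sqrt{\Delta}$ is bounded above and below---turns this pair precisely into $\widehat{{\bf T}}(K,z){\bf g}={\bf g}$. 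Reversing the transformation (dividing by $\sqrt{\Delta}$ and reconstructing $f_2$ via the displayed formula) yields a multiplicity-preserving bijection between $\ker({\bf H}(K)-z)$ and $\ker(\widehat{{\bf T}}(K,z)-I)$.

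The counting identity then follows from the standard Birman--Schwinger argument. The family $z\mapsto\widehat{{\bf T}}(K,z)$ is strictly monotone increasing in $z$ (since $\partial_z\widehat{T}_{00}=1>0$, $(w_2-z)^{-1}$ grows in $z$, and $\partial_z\Delta<0$ forces $1/\sqrt{\Delta}$ to grow in $z$), its eigenvalue branches $\mu_j(z)$ are continuous in $z$, and by the bijection above they cross the level $1$ precisely at the eigenvalues of ${\bf H}(K)$. Hence the number of branches strictly above $1$ at the parameter $z$---finite by compactness of $\widehat{{\bf T}}(K,z)$---coincides with the number of eigenvalues of ${\bf H}(K)$ strictly below $z$, which is $N(K,z)$. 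The main delicacy I foresee is formalising the joint form-monotonicity across the three blocks so that it matches the Glazman variational descriptions of both sides; this is direct but requires care given the explicit kernels.
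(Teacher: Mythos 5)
The paper does not prove this lemma itself; it refers to Lemma 5.1 of \cite{ALR}. Your reduction is exactly the standard one used there: eliminating $f_2$ from the third row, recognising the local part of the resulting Schur complement as $\Delta(K-p\,;z-l_1\varepsilon(p))$, and rescaling $f_1\mapsto\sqrt{\Delta}\,f_1$ does produce the fixed-point equation $\widehat{{\bf T}}(K,z){\bf g}={\bf g}$, and your compactness/continuity argument for $z<\tau_{\rm ess}(K)$ (positivity of $\Delta$ via Theorem \ref{THM 2.1} and Lemma \ref{LEM 3.1}, boundedness of the kernels) is sound.

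The gap is in the final counting step. You derive $N(K,z)=n(1,\widehat{{\bf T}}(K,z))$ from strict form-monotonicity of $z\mapsto\widehat{{\bf T}}(K,z)$ plus the bijection of kernels, but the monotonicity you assert is not established and is not a consequence of the entrywise observations you list. The quadratic form of $\widehat{{\bf T}}(K,z)$ contains the cross term $-2\,{\rm Re}\bigl[\overline{g_0}\int v_0(t)g_1(t)\Delta(K-t\,;z-l_1\varepsilon(t))^{-1/2}dt\bigr]$, whose $z$-derivative has no definite sign (it is a positive weight times an indefinite expression), so monotone growth of $1/\sqrt{\Delta}$ says nothing about monotonicity of the block matrix. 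Even the diagonal block is problematic: after substituting $u=v_1g_1/\sqrt{\Delta}$, the form of $\widehat{T}_{11}$ is $\tfrac12\iint\overline{u(p)}u(q)(w_2(K;p,q)-z)^{-1}dp\,dq$, and since $w_2$ depends on $p+q$ through $\varepsilon(K-p-q)$ rather than on $p-q$, the kernel $(w_2-z)^{-1}$ is not a positive-definite kernel in $(p,q)$ in any obvious sense, so neither the sign of this form nor its monotonicity in $z$ is immediate. Without monotonicity, eigenvalue branches of $\widehat{{\bf T}}(K,z')$ may cross the level $1$ more than once and the count of branches above $1$ need not equal the number of crossings, so the identity fails to follow. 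The standard repair --- and the route taken in \cite{ALR} --- avoids monotonicity altogether: since $H_{22}(K)-z>0$ for $z<m_K$, the Schur--Feshbach complement together with inertia additivity gives $N(K,z)=n(0,-M(z))$ for the $2\times2$ complement $M(z)$ on ${\mathcal H}_0\oplus{\mathcal H}_1$, and your own computation shows $M(z)=D(z)^{1/2}\bigl({\bf I}-\widehat{{\bf T}}(K,z)\bigr)D(z)^{1/2}$ with $D(z)={\rm diag}\bigl(1,\Delta(K-\cdot\,;z-l_1\varepsilon(\cdot))\bigr)$ bounded and boundedly invertible; invariance of the number of negative squares under such a congruence then yields $N(K,z)=n(1,\widehat{{\bf T}}(K,z))$ directly. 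You should replace the monotonicity argument by this congruence argument.
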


For the proof of this lemma, see Lemma 5.1 of \cite{ALR}.

\section{Finiteness of the number of eigenvalues of ${\bf H}(K),$ $K \in \Lambda$}

We starts the proof of the finiteness of the number of negative
eigenvalues (Theorem \ref{THM 2.2}) with the following two lemmas.

\begin{lem}\label{estimate for w2}
Let $K,p',q' \in \Lambda.$ Then there exist the numbers $C_1,
C_2>0$ and $\delta>0$ such that\\
{\rm (i)} $C_1(|p-p'|^2+|q-q'|^2)\leq w_2(K;p,q) \leq
C_2(|p-p'|^2+|q-q'|^2),$ $(p,q)\in U_\delta(p') \times
U_\delta(q');$\\
{\rm (ii)} $w_2(K;p,q) \geq C_1,$ $(p,q) \not\in
\bigcup\limits_{p'\in \Lambda} U_\delta(p') \times
\bigcup\limits_{q'\in \Lambda} U_\delta(q').$
\end{lem}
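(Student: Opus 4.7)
The plan is to reduce the estimates for $w_2(K;p,q)=l_1\varepsilon(p)+l_1\varepsilon(q)+l_2\varepsilon(K-p-q)$ to local quadratic expansions of $\varepsilon(\cdot)$ near the points of $\Lambda$, together with a compactness argument away from these points.

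First, I would observe that $\Lambda$ coincides with the set of $p\in\mathbb T^3$ for which $np\in 2\pi\mathbb Z^3$ (mod $2\pi$), since these are precisely the points where every summand $1-\cos(np^{(i)})$ vanishes. In particular, $\Lambda$ is closed under the group operation in $\mathbb T^3$: if $K,p',q'\in\Lambda$, then $K-p'-q'\in\Lambda$, so $\varepsilon(K-p'-q')=0$ and $w_2(K;p',q')=0$. Next, using $1-\cos(n s)=\tfrac{n^2}{2}s^2+O(s^4)$, for any fixed $\ell'\in\Lambda$ there exist $\delta>0$ and constants $a_1,a_2>0$ such that
\[
a_1|p-\ell'|^2\le \varepsilon(p)\le a_2|p-\ell'|^2,\qquad p\in U_\delta(\ell'),
\]
and $\varepsilon(p)\ge a_3>0$ outside $\bigcup_{\ell'\in\Lambda}U_\delta(\ell')$. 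This is the same estimate that already appears in \eqref{estimate for epsilon 1}--\eqref{estimate for epsilon 2}.

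For (i), fix $K,p',q'\in\Lambda$ and choose $\delta$ small enough so that the three relevant neighborhoods $U_\delta(p')$, $U_\delta(q')$, $U_\delta(K-p'-q')$ satisfy the local quadratic bounds above. For $(p,q)\in U_\delta(p')\times U_\delta(q')$ (shrinking $\delta$ if necessary so that $K-p-q\in U_\delta(K-p'-q')$), the local bounds yield
\[
w_2(K;p,q)\ge l_1 a_1\bigl(|p-p'|^2+|q-q'|^2\bigr),
\]
which gives the lower bound (the $l_2\varepsilon(K-p-q)$ term is non-negative and is simply dropped). For the upper bound, the same local bounds together with the elementary inequality $|(p-p')+(q-q')|^2\le 2(|p-p'|^2+|q-q'|^2)$ give
\[
w_2(K;p,q)\le (l_1 a_2+2l_2 a_2)\bigl(|p-p'|^2+|q-q'|^2\bigr),
\]
establishing (i) with $C_1=l_1a_1$ and $C_2=(l_1+2l_2)a_2$.

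For (ii), the function $w_2(K;\cdot,\cdot)$ is continuous on the compact set $\mathbb T^3\times\mathbb T^3$ and, as noted, its zero set coincides with $\Lambda\times\Lambda$ (since each of the three non-negative summands must vanish simultaneously, forcing $p,q\in\Lambda$ and hence also $K-p-q\in\Lambda$). Therefore its restriction to the complement of $\bigcup_{p'\in\Lambda}U_\delta(p')\times\bigcup_{q'\in\Lambda}U_\delta(q')$ is a continuous positive function on a compact set, and hence admits a positive lower bound $C_1>0$. The only subtle point—and the main thing to get right—is the claim that the zero set of $w_2(K;\cdot,\cdot)$ is exactly $\Lambda\times\Lambda$; this relies precisely on the group-stability of $\Lambda$ noted in the first paragraph and on the fact that $\varepsilon^{-1}(0)=\Lambda$. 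Once this is checked the lemma follows.
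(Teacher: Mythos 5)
Your proof is correct and follows essentially the same route as the paper: a local quadratic expansion of $w_2(K;\cdot,\cdot)$ about the points of $\Lambda\times\Lambda$ for (i), and continuity plus compactness of the complement of the $\delta$-neighborhoods for (ii). The only difference is cosmetic: the paper expands $w_2$ jointly in $(p,q)$ and relies on positive definiteness of the Hessian form $(l_1+l_2)|x|^2+2l_2(x,y)+(l_1+l_2)|y|^2$, whereas you bound the three $\varepsilon$-terms separately (dropping the nonnegative term $l_2\varepsilon(K-p-q)$ for the lower bound and using the subgroup property of $\Lambda$ to control it for the upper bound), which is, if anything, slightly more elementary and makes explicit the fact that the zero set of $w_2(K;\cdot,\cdot)$ is exactly $\Lambda\times\Lambda$ -- a point the paper takes for granted.
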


\begin{proof}
Since for any $K \in \Lambda$ the function $w_2(K;\cdot,\cdot)$
has non-degenerate zero minimum at the points $(p', q') \in
\Lambda \times \Lambda,$ we obtain the following expansion
\begin{align*}
w_2(K;p,q)=&\frac{n^2}{2} \left[ (l_1+l_2)|p-p'|^2+2l_2(p-p',
q-q')+(l_1+l_2)|q-q'|^2 \right]\\
& +O(|p-p'|^4)+O(|q-q'|^4)
\end{align*}
as $|p-p'|,\,|q-q'|\to 0$ for $p',q'\in \Lambda.$ Then  there
exist positive numbers $C_1, C_2$ and $\delta$ so that (i) and
(ii) hold true.
\end{proof}

\begin{lem}\label{T(K,z) compact}
Let $K \in \Lambda$ and one of the following assumptions hold:\\
{\rm (i)} the operator ${\bf h}({\bf 0})$ has a zero eigenvalue;\\
{\rm (ii)} a zero is the regular type point for ${\bf h}({\bf 0})$
and ${\bf h}({\bf 0}) \geq 0.$

Then for any $z \leq 0$ the operator $\widehat{{\bf T}}(K,z)$ is
compact and continuous from the left up to $z=0.$
\end{lem}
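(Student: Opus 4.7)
The plan is to analyze each block of $\widehat{\mathbf{T}}(K,z)$ separately, reducing everything to uniform-in-$z$ Hilbert--Schmidt bounds that permit dominated convergence as $z\to 0^-$. The block $\widehat{T}_{00}(K,z)$ acts on the one-dimensional space $\mathcal{H}_0=\mathbb{C}$ and is scalar multiplication depending continuously on $z$. The off-diagonal block $\widehat{T}_{01}(K,z)$ is the rank-one map $g_1\mapsto -\int_{{\Bbb T}^3} v_0(t)g_1(t)/\sqrt{\Delta(K-t;z-l_1\varepsilon(t))}\,dt$, whose Hilbert--Schmidt norm is the $L_2$-norm of $v_0/\sqrt{\Delta(K-\cdot;z-l_1\varepsilon(\cdot))}$. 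The formula for $\Delta$ shows that $z\mapsto\Delta(k;z)$ is strictly decreasing, so for $z\leq 0$ one has $\Delta(K-t;z-l_1\varepsilon(t))\geq \Delta(K-t;-l_1\varepsilon(t))$. Under (i), Lemma \ref{estimate for Delta} bounds the right-hand side below by $C|t-q'|^2$ near $q'\in\Lambda$ and by $C_3$ elsewhere; under (ii), Lemma \ref{estimate-regular point} gives a uniform positive lower bound. Since $|t-q'|^{-2}$ is integrable in three dimensions and $v_0$ is bounded, the $L_2$-bound is uniform in $z\leq 0$, and dominated convergence yields compactness and left-continuity at $z=0$ for $\widehat{T}_{01}(K,z)$.

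The main work is for $\widehat{T}_{11}(K,z)$, the integral operator on $L_2({\Bbb T}^3)$ with kernel
$$
K_z(p,t)=\frac{v_1(p)v_1(t)}{2\sqrt{\Delta(K-p;z-l_1\varepsilon(p))}\sqrt{\Delta(K-t;z-l_1\varepsilon(t))}\,(w_2(K;p,t)-z)}.
$$
The idea is to produce a single majorant $M\in L_2({\Bbb T}^3\times{\Bbb T}^3)$ with $|K_z|\leq M$ for all $z\leq 0$, which simultaneously gives the Hilbert--Schmidt (hence compact) property and, via dominated convergence, left-continuity at $z=0$. Outside the set $S:=\bigcup_{p',q'\in\Lambda}U_\delta(p')\times U_\delta(q')$, Lemma \ref{estimate for w2}(ii) forces $w_2(K;p,t)-z\geq C$, and the lower bounds on $\Delta$ combined with the continuity of $v_1$ give a bounded contribution. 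On each of the finitely many patches $U_\delta(p')\times U_\delta(q')$, monotonicity in $z$ replaces the two $\Delta$-factors by their values at $z=0$, while Lemma \ref{estimate for w2}(i) yields $w_2(K;p,t)-z\geq w_2(K;p,t)\geq C(|p-p'|^2+|t-q'|^2)$.

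The key pointwise bound on a patch is
$$
|K_z(p,t)|^2\leq \frac{C}{(|p-p'|^2+|t-q'|^2)^2}.
$$
In case (i), Lemma \ref{estimate for Delta}(i) gives $\Delta\geq C|p-p'|^2$ (and similarly in $t$), while Lemma \ref{resonance or eigenvalue}(i) forces $v_1(q')=0$ for every $q'\in\Lambda$, so the smoothness hypothesis on $v_1$ together with \eqref{estimate for v1} provides $|v_1(p)|\leq C|p-p'|^\alpha$ with $\alpha\geq 1$; the quadratic factors in the numerator then cancel those in the denominator. In case (ii), Lemma \ref{estimate-regular point} gives $\Delta\geq C>0$ and $v_1$ is merely bounded, yielding the same bound directly. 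Passing to six-dimensional polar coordinates with radius $R=\sqrt{|p-p'|^2+|t-q'|^2}$ (so the volume element is $\propto R^5\,dR$), the radial integral $\int R^{-4}R^5\,dR$ converges near $R=0$, so the right-hand side is $L_2$-integrable on each patch. Summing over the finitely many patches and adding the bounded part on the complement of $S$ exhibits the required $z$-independent majorant $M\in L_2({\Bbb T}^3\times{\Bbb T}^3)$.

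For each $(p,t)$ outside the measure-zero set where the denominators degenerate, $K_z(p,t)\to K_0(p,t)$ as $z\to 0^-$, so dominated convergence with the majorant $M$ gives $K_z\to K_0$ in $L_2$, i.e.\ $\widehat{T}_{11}(K,z)\to \widehat{T}_{11}(K,0)$ in Hilbert--Schmidt (hence operator) norm. Assembling the three blocks completes the proof. The main obstacle is packaging the cancellations into a single bound that is both uniform in $z\leq 0$ and still in $L_2$ in six variables when the singularities of the two $\Delta$-factors and of $w_2(K;p,t)-z$ collapse onto the same lattice point $(p',q')\in\Lambda\times\Lambda$ at $z=0$; the fine matching between the quadratic decay of $\Delta$ (case (i)) or the vanishing order of $v_1$ at $\Lambda$, and the quadratic vanishing of $w_2$, is what makes the six-dimensional integral converge.
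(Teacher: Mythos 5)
Your proposal is correct and follows essentially the same route as the paper: dominate the kernel of $\widehat{T}_{11}(K,z)$ by a $z$-independent square-integrable majorant built from the lower bounds on $w_2$ (Lemma \ref{estimate for w2}) and on $\Delta$ together with the vanishing of $v_1$ on $\Lambda$ in case (i), conclude the Hilbert--Schmidt property and left-continuity by dominated convergence, and treat the remaining blocks as rank-one. If anything, your citation of Lemma \ref{estimate for Delta} for case (i) is the apt one (the paper's printed proof invokes Corollary \ref{ineq for Delta1}, which is stated under the resonance hypothesis), and your explicit use of the monotonicity of $\Delta(k;\cdot)$ to reduce to $z=0$ makes the uniformity of the majorant cleaner.
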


\begin{proof}
Let $K \in \Lambda.$ Denote by $Q(K;p,q;z)$ the kernel of the
integral operator $\widehat{T}_{11}(K,z),$ $z<0,$ that is,
$$
Q(K;p,q;z):=\frac{v_1(p)v_1(q)}{2\sqrt{\Delta(K-p\,; z-l_1
\varepsilon(p))}(w_2(K;p,q)-z) \sqrt{\Delta(K-q\,; z-l_1
\varepsilon(q))}}.
$$

If the operator ${\bf h}({\bf 0})$ has a zero eigenvalue, then by
the assertion (i) of Lemma \ref{resonance or eigenvalue} we have
$v_1(q')=0$ for all $q' \in \Lambda.$ By  virtue of inequality
\eqref{estimate for v1}, Corollary \ref{ineq for Delta1} and Lemma
\ref{estimate for w2} the kernel $Q(K;p,q;z)$ is estimated by
\begin{equation}\label{estimate function1}
C_1 \sum_{p',q' \in \Lambda} \left(
\frac{\chi_{\delta}(p-p')}{|p-p'|}+1 \right)\left( \frac{|q-q'|
\chi_\delta(p-p')\chi_\delta(q-q')}{|p-p'|^2+|q-q'|^2}+1\right)\left(
\frac{\chi_{\delta}(q-q')}{|q-q'|^{\frac{1}{2}}}+1\right),
\end{equation}
where $\chi_\delta(\cdot)$ is the characteristic function of
$U_\delta({\bf 0}).$

If a zero is the regular type point for ${\bf h}({\bf 0})$ and
${\bf h}({\bf 0}) \geq 0,$ then by virtue of Lemmas
\ref{estimate-regular point} and \ref{estimate for w2} the kernel
$Q(K;p,q;z)$ is estimated by
\begin{equation}\label{estimate function2}
C_1 \sum_{p',q' \in \Lambda} \left(
\frac{\chi_\delta(p-p')\chi_\delta(q-q')}{|p-p'|^2+|q-q'|^2}+1\right).
\end{equation}

The functions \eqref{estimate function1} and \eqref{estimate
function2} are square-integrable on $({\Bbb T}^3)^2$ and hence for
any $z\leq 0$ the operator $\widehat{T}_{11}(K,z)$ is
Hilbert-Schmidt.

The kernel function of $\widehat{T}_{11}(K,z),$ $z<0$ is
continuous in $p,q\in {\Bbb T}^3.$ Therefore the continuity of the
operator $\widehat{T}_{11}(K,z)$ from the left up to $z=0$ follows
from Lebesgue's dominated convergence theorem.

Since for all $z \leq 0$ the operators $\widehat{T}_{00}(K,z),$
$\widehat{T}_{01}(K,z)$ and $\widehat{T}_{01}^*(K,z)$ are of rank
1 and continuous from the left up to $z=0$ one concludes that
$\widehat{\bf T}(K,z)$ is compact and continuous from the left up
to $z=0.$
\end{proof}

We are now ready for the

\begin{proof}[Proof of Theorem $\ref{THM 2.2}$]
Let the conditions of Theorem \ref{THM 2.2} be fulfilled. Using
the Weyl inequality
$$
n(\lambda_1+\lambda_2, A_1+A_2) \leq n(\lambda_1,
A_1)+n(\lambda_2, A_2)
$$
for the sum of compact operators $A_1$ and $A_2$ and for any
positive numbers $\lambda_1$ and $\lambda_2$ we have
\begin{equation}\label{Weyl inequality}
n(1, \widehat{\bf T}(K,z))\leq n(1/2, \widehat{\bf T}(K,0))+n(1/2,
\widehat{\bf T}(K,z)-\widehat{\bf T}(K,0))
\end{equation}
for all $z<0.$

By virtue of Lemma \ref{T(K,z) compact} the operator $\widehat{\bf
T}(K,z)$ is continuous from the left up to $z=0,$ which implies
that the second summand on the r.h.s. of \eqref{Weyl inequality}
tends to zero as $z \to -0.$ By Lemma \ref{LEM 4.3} we have
$N(K,z)=n(1, \widehat{\bf T}(K,z))$ as $z<0$ and hence
$$
\lim\limits_{z\to -0}N(K,z)=N(K,0)\leq n(1/2, \widehat{\bf
T}(K,0)).
$$
Thus $N(K,0)\leq n(1/2, \widehat{\bf T}(K,0)).$ By Lemma
\ref{T(K,z) compact} the number $n(1/2, \widehat{\bf T}(K,0))$ is
finite and hence $N(K,0)<\infty.$ This completes the proof of
Theorem \ref{THM 2.2}.
\end{proof}

\section{Asymptotics for the number of negative eigenvalues of ${\bf H}(K),$ $K \in \Lambda$}

In this section first we derive the asymptotic relation
\eqref{2.2} for the number of negative eigenvalues of ${\bf
H}(K),$ $K \in \Lambda.$

Let ${\Bbb S}^2$ be the unit sphere in ${\Bbb R}^3$ and ${\bf
\sigma}=L_2({\Bbb S}^2).$ As we shall see, the discrete spectrum
asymptotics of the operator $\widehat{{\bf T}}(K,z)$ $K \in
\Lambda$ as $z \to -0$ is determined by the integral operator
$S_{\bf r},$ ${\bf r}=1/2 |\log|z||$ in $L_2((0, {\bf r}), {\bf
\sigma})$ with the kernel
$$
S(y,t):=\frac{1}{4 \pi^2} \frac{(l_1+l_2)^2}{\sqrt{l_1^2+2l_1l_2}}
\frac{1}{(l_1+l_2) \cosh y+l_2 t},
$$
where $y=x-x',$ $x, x' \in (0, {\bf r})$ and $t=\langle \xi, \eta
\rangle$ is the inner product of the arguments $\xi, \eta \in
{\Bbb S}^2.$

The eigenvalues asymptotics for the operator $S_{\bf r}$ have been
studied in detail by Sobolev \cite{Sob}, by employing an argument
used in the calculation of the canonical distribution of Toeplitz
operators.

Let us recall some results of \cite{Sob} which are important in
our work.

The coefficient in the asymptotics \eqref{2.2} of $N(K,z)$ will be
expressed by means of the self-adjoint integral operator
$\widehat{S}(\theta),$ $\theta \in {\Bbb R},$ in the space
$\sigma,$ whose kernel is of the form
$$
\widehat{S}(\theta, t):=\frac{1}{4 \pi^2}
\frac{(l_1+l_2)^2}{l_1^2+2l_1l_2} \frac{\sinh [\theta \arccos
\frac{l_2}{l_1+l_2} t]}{\sinh (\pi \theta)},
$$
and depends on $t=\langle \xi, \eta \rangle.$ For $\gamma>0,$
define
$$
U(\gamma):=\frac{1}{4 \pi} \int_{-\infty}^{+\infty}
n(\gamma, \widehat{S}(\theta)) d \theta.
$$
This function was studied in detail in \cite{Sob}; where it was
used in showing existence proof of the Efimov effect. In
particular, as it was shown in \cite{Sob}, the function $U(\cdot)$
is continuous in $\gamma>0$, and the limit
\begin{equation}\label{U gamma}
\lim\limits_{{\bf r} \to 0} \frac{1}{2} {\bf r}^{-1} n(\gamma,
S_{\bf r})=U(\gamma)
\end{equation}
exists and the number $U(1)$ is positive.

For completeness, we reproduce the following lemma, which has been
proven in \cite{Sob}.

\begin{lem}\label{LEM 4.4} Let $A(z)= A_0(z)+A_1(z),$ where $A_0(z)$
$(A_1(z))$ is compact and continuous for $z < 0$ $($for $z\leq
0).$ Assume that the limit
$$
\lim\limits_{z\to -0}f(z)\,n(\gamma, A_0(z)) = l(\gamma)
$$
exists and $l(\cdot)$ is continuous in $(0; +\infty)$ for some
function $f(\cdot),$ where $f(z)\to 0$ as $z \to -0.$ Then the
same limit exists for $A(z)$ and
$$
\lim\limits_{z\to -0}f(z)\,n(\gamma, A(z)) = l(\gamma).
$$
\end{lem}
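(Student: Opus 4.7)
My plan is to deduce the lemma from the Ky Fan--Weyl counting inequality
$$
n(\gamma_1+\gamma_2,\,B_1+B_2)\le n(\gamma_1,B_1)+n(\gamma_2,B_2),
$$
which holds for self-adjoint compact operators $B_1,B_2$ and positive thresholds $\gamma_1,\gamma_2$. Applied to $A(z)=A_0(z)+A_1(z)$ with thresholds $\gamma-\varepsilon,\,\varepsilon$, and applied again to $A_0(z)=A(z)+(-A_1(z))$ with thresholds $\gamma,\,\varepsilon$ after replacing $\gamma$ by $\gamma+\varepsilon$, this yields the two-sided estimate
\begin{align*}
n(\gamma,A(z)) &\le n(\gamma-\varepsilon,A_0(z))+n(\varepsilon,A_1(z)),\\
n(\gamma+\varepsilon,A_0(z)) &\le n(\gamma,A(z))+n(\varepsilon,-A_1(z)),
\end{align*}
valid for every $\varepsilon\in(0,\gamma)$ and every $z<0$.

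Next I would show that the perturbation counts $n(\varepsilon,\pm A_1(z))$ stay bounded as $z\to -0$. Since $A_1(0)$ is compact, $n(\varepsilon,\pm A_1(0))<\infty$; by the assumed norm-continuity of $A_1(\cdot)$ up to $z=0$ one can fix $z_0<0$ such that $\|A_1(z)-A_1(0)\|<\varepsilon/2$ for $z_0<z<0$, and a further application of the Weyl inequality gives
$$
n(\varepsilon,\pm A_1(z))\le n(\varepsilon/2,\pm A_1(0))+n(\varepsilon/2,\pm(A_1(z)-A_1(0)))=n(\varepsilon/2,\pm A_1(0))<\infty.
$$
Combined with $f(z)\to 0$, this forces $f(z)\,n(\varepsilon,\pm A_1(z))\to 0$ as $z\to -0$.

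Multiplying the two Weyl inequalities by $f(z)$ and using the given asymptotics $f(z)\,n(\gamma',A_0(z))\to l(\gamma')$ for $\gamma'$ near $\gamma$, I then pass respectively to $\limsup$ and $\liminf$ to obtain
$$
\limsup_{z\to -0}f(z)\,n(\gamma,A(z))\le l(\gamma-\varepsilon),\qquad \liminf_{z\to -0}f(z)\,n(\gamma,A(z))\ge l(\gamma+\varepsilon).
$$
Letting $\varepsilon\to 0$ and invoking continuity of $l(\cdot)$ at $\gamma$ squeezes both sides to $l(\gamma)$, which is the desired conclusion.

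The only delicate step, and the place where all hypotheses enter nontrivially, is the uniform boundedness of $n(\varepsilon,\pm A_1(z))$ in a left neighborhood of $z=0$; this is what genuinely requires the compactness and \emph{norm}-continuity of $A_1(\cdot)$ up to the endpoint. Everything else is a routine two-sided Weyl estimate combined with continuity of the limit function $l$.
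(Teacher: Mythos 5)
Your argument is correct and is exactly the standard one: the paper does not reproduce a proof of this lemma at all (it cites Sobolev's work, where the proof is precisely this two-sided Weyl counting estimate), so your proposal matches the intended argument. The one step worth making explicit, which you correctly identify as the crux, is that $n(\varepsilon/2,\pm(A_1(z)-A_1(0)))=0$ once $\|A_1(z)-A_1(0)\|<\varepsilon/2$, which is what lets the perturbation term vanish after multiplication by $f(z)\to 0$.
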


\begin{rem}
Since the function $U(\cdot)$ is continuous with respect to
$\gamma,$ it follows from Lemma $\ref{LEM 4.4}$ that any
perturbation of $A_0(z)$ treated in Lemma $\ref{LEM 4.4}$ $($which
is compact and continuous up to $z=0)$  does not contribute to the
asymptotic relation $\eqref{2.2}.$ In the rest part of this
subsection we use this fact without further comments.
\end{rem}

Now we are going to reduce the study of the asymptotics for the
operator $\widehat{{\bf T}}(K,z)$ with $K \in \Lambda$ to that of
the asymptotics $S_{\bf r}.$

Let ${\bf T}(\delta; |z|)$ be the operator in ${\mathcal H}_0
\oplus {\mathcal H}_1$ defined by
$$
{\bf T}(\delta; |z|):=\left( \begin{array}{cc}
0 & 0\\
0 & T_{11}(\delta; |z|)\\
\end{array}
\right),
$$
where $T_{11}(\delta; |z|)$ is the integral operator in ${\mathcal
H}_1$ with the kernel
\begin{align*}
D \sum_{p',q' \in \Lambda_0}
\frac{v_1(p')v_1(q')\chi_\delta(p-p')\chi_\delta(q-q')(m |p-p'|^2+
2|z|/n^2)^{-\frac{1}{4}}(m |q-p'|^2+2|z|/n^2)^{-\frac{1}{4}}}
{(l_1+l_2)|p-p'|^2+2l_2(p-p', q-q')+(l_1+l_2)|q-q'|^2+2|z|/n^2}.
\end{align*}
Here
$$
D:=\frac{(l_1+l_2)^{3/2}}{\pi^2} \Bigl( \sum_{q' \in \Lambda_0}
v_1^2(q') \Bigr)^{-1} \quad \mbox{and} \quad
m:=\frac{l_1^2+2l_1l_2}{l_1+l_2}.
$$

The operator ${\bf T}(\delta; |z|)$ is called singular part of
$\widehat{{\bf T}}(K,z).$

The main technical point to apply Lemma \ref{LEM 4.4} is the
following lemma.

\begin{lem}\label{LEM 4.5}
Let $K \in \Lambda.$ Then for any $z\leq 0$ and small $\delta>0$
the difference $\widehat{{\bf T}}(K,z)-{\bf T}(\delta; |z|)$ is
compact and is continuous with respect to $z \leq 0.$
\end{lem}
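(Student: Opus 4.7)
The plan is to treat the four blocks of the $2\times 2$ matrix $\widehat{\bf T}(K,z)-{\bf T}(\delta;|z|)$ separately. The $(0,0)$-entry is multiplication by the scalar $1+z-w_0(K)$, which is rank one and trivially continuous in $z$. The $(0,1)$- and $(1,0)$-blocks are rank one integral operators, and Corollary~\ref{ineq for Delta1} provides the lower bound $\sqrt{\Delta(K-t\,; z-l_1\varepsilon(t))}\ge C|t-t'|^{1/2}$ near each $t'\in\Lambda$ plus a positive uniform lower bound off $\bigcup_{t'}U_\delta(t')$, so the kernel $v_0(t)/\sqrt{\Delta(K-t\,;z-l_1\varepsilon(t))}$ is bounded in $L_2({\Bbb T}^3)$ uniformly in $z\le 0$ and Lebesgue's dominated convergence yields continuity up to $z=0$. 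The substantive work is therefore the $(1,1)$-block.

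For this block, let $Q(K;p,q;z)$ denote the kernel of $\widehat{T}_{11}(K,z)$ (as in the proof of Lemma~\ref{T(K,z) compact}) and let $K_{\mathrm{sing}}(p,q;|z|)$ denote the kernel of $T_{11}(\delta;|z|)$. I would partition $({\Bbb T}^3)^2$ into the union $\Omega:=\bigcup_{p',q'\in\Lambda}U_\delta(p')\times U_\delta(q')$ and its complement. On the complement, $K_{\mathrm{sing}}\equiv 0$ by construction, and by Corollary~\ref{ineq for Delta1}(ii) together with Lemma~\ref{estimate for w2}(ii) every denominator in $Q$ is bounded below uniformly in $z\le 0$, so the restricted kernel is bounded and continuous in $(p,q,z)$ up to $z=0$ and produces a Hilbert--Schmidt operator continuous in $z$. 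On each cell $U_\delta(p')\times U_\delta(q')$ with $(p',q')\notin\Lambda_0\times\Lambda_0$, the singular part again vanishes, and estimate \eqref{estimate for v1} supplies the extra factor $|p-p'|^\alpha$ or $|q-q'|^\alpha$ with $\alpha\ge 1$; combining it with Corollary~\ref{ineq for Delta1}(i) and Lemma~\ref{estimate for w2}(i) majorizes $Q$ on that cell by a fixed $L_2$ function independent of $z\le 0$, and dominated convergence again gives compactness and continuity up to $z=0$.

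The essential case is $(p',q')\in\Lambda_0\times\Lambda_0$. Here I would use Lemma~\ref{LEM 4.1} to expand $\Delta(K-p\,; z-l_1\varepsilon(p))$ as its leading square-root term plus $O(|p-p'|^2)+O(|z|)$, Lemma~\ref{estimate for w2} to write $w_2(K;p,q)-z$ as $\frac{n^2}{2}\bigl[(l_1+l_2)|p-p'|^2+2l_2(p-p',q-q')+(l_1+l_2)|q-q'|^2\bigr]+|z|$ plus a quartic remainder, and Taylor expand $v_1(p)v_1(q)=v_1(p')v_1(q')+O(|p-p'|)+O(|q-q'|)$. Substituting these into $Q$ and matching with $K_{\mathrm{sing}}$ (whose constants $D$ and $m$ are precisely those produced by the leading expansions), the difference decomposes into a finite sum of remainder terms in which each correction lowers the total order of singularity by at least one power of $|p-p'|$ or $|q-q'|$. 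Positive definiteness of the quadratic form (valid since $l_1,l_2>0$) bounds it below by $c(|p-p'|^2+|q-q'|^2)$, and a direct Hilbert--Schmidt estimate in six-dimensional local coordinates shows that each remainder kernel is square-integrable with $L_2$-norm bounded uniformly in $z\le 0$; continuity up to $z=0$ then follows by dominated convergence.

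The main obstacle will be the algebraic bookkeeping in this last step: after substituting the three expansions one must regroup the remainders and verify, term by term, that the resulting powers of $|p-p'|$, $|q-q'|$, $(m|p-p'|^2+2|z|/n^2)^{1/4}$, $(m|q-q'|^2+2|z|/n^2)^{1/4}$, and the quadratic denominator combine to give a kernel which is square-integrable on a six-dimensional cell uniformly in $z$. Once this is carried out, compactness and continuity of the full $(1,1)$-block, and hence of $\widehat{\bf T}(K,z)-{\bf T}(\delta;|z|)$, follow by combining the finitely many cell contributions with the far-region estimate.
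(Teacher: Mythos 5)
Your proposal is correct and follows essentially the same route as the paper: the corner blocks are dismissed as rank-one and continuous, while the $(1,1)$-difference is majorized, via Corollary~\ref{ineq for Delta1} (equivalently the expansion of Lemma~\ref{LEM 4.1}) and Lemma~\ref{estimate for w2}, by a square-integrable function uniform in $z\le 0$, giving the Hilbert--Schmidt property and, by dominated convergence, continuity up to $z=0$. The paper simply writes down the resulting majorant in one step, whereas you spell out the cell-by-cell cancellation against $T_{11}(\delta;|z|)$ that produces it; the content is the same.
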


\begin{proof}
By Lemma \ref{estimate for w2} and Corollary \ref{ineq for
Delta1}, one can estimate the kernel of the operator
$\widehat{T}_{11}(K,z)-T_{11}(\delta; |z|),$ $z \leq 0,$ by the
square-integrable function
\begin{align*}
C_1 \sum_{p',q' \in \Lambda_0} \Bigl[ & \frac{1}{|p-p'|^{1/2}}+
\frac{1}{|q-q'|^{1/2}}+
\frac{|p-p'|+|q-q'|}{|p-p'|^{1/2}(|p-p'|^2+|q-q'|^2)|q-q'|^{1/2}}\\
& +\frac{|z|^{1/2}} {(|p-p'|^2+|z|)^{1/4} (|p-p'|^2+|q-q'|^2)
(|q-q'|^2+|z|)^{1/4}}+1 \Bigr].
\end{align*}
Hence, the operator $\widehat{T}_{11}(K,z)-T_{11}(\delta; |z|)$
belongs to the Hilbert-Schmidt class for all $z\leq 0.$ In
combination with the continuity of the kernel of the operator with
respect to $z<0,$ this implies the continuity of
$\widehat{T}_{11}(K,z)-T_{11}(\delta; |z|)$ with respect to $z
\leq 0.$

It is easy to see that $\widehat{T}_{00}(K,z),$
$\widehat{T}_{01}(K,z)$ and $\widehat{T}_{01}^*(K,z)$ are rank 1
operators and they are continuous from the left up to $z=0.$
Consequently $\widehat{{\bf T}}(K,z)-{\bf T}(\delta; |z|)$ is
compact and continuous in $z \leq 0.$
\end{proof}

From definition of ${\bf T}(\delta; |z|)$ it follows that
$\sigma({\bf T}(\delta; |z|))=\{0\} \cup \sigma(T_{11}(\delta;
|z|))$ and hence $n(\gamma, {\bf T}(\delta; |z|))=n(\gamma,
T_{11}(\delta; |z|))$ for all $\gamma>0.$

The following theorem is fundamental for the proof of the
asymptotic relation \eqref{2.2}.

\begin{thm}\label{THM 4.6}
We have the relation
\begin{equation}\label{4.2}
\lim\limits_{|z| \to 0} \frac{n(\gamma, T_{11}(\delta;
|z|))}{|\log |z|| }=U(\gamma), \quad \gamma>0.
\end{equation}
\end{thm}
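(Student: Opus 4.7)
The plan is to prove $\eqref{4.2}$ by a sequence of unitary transformations that reduce $T_{11}(\delta; |z|)$, modulo operators that are compact and continuous as $z \to -0$, to a constant multiple of Sobolev's model operator $S_{{\bf r}(|z|)}$, so that $\eqref{U gamma}$ together with Lemma $\ref{LEM 4.4}$ deliver the asymptotics immediately.

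The first step exploits the translation structure of the kernel of $T_{11}(\delta; |z|)$: its $(p', q')$-summand depends on $(p, q)$ only through the shifts $p - p'$ and $q - q'$. Introducing the map $\Phi : L_2(U_\delta({\bf 0})) \to L_2({\Bbb T}^3)$ by
$$(\Phi \eta)(p) := \sum_{p' \in \Lambda_0} v_1(p')\, \eta(p - p')\, \chi_\delta(p - p'),$$
the pairwise disjointness of the neighborhoods $\{U_\delta(p')\}_{p' \in \Lambda_0}$ (valid for $\delta < \pi/n$) gives $\Phi^* \Phi = \bigl(\sum_{p' \in \Lambda_0} v_1^2(p')\bigr) I$, while a direct computation identifies
$$T_{11}(\delta; |z|) = D\, \Phi\, B(|z|)\, \Phi^*,$$
where $B(|z|)$ is the integral operator on $L_2(U_\delta({\bf 0}))$ with the translation-invariant kernel
$$B_{|z|}(u, v) := \frac{(m|u|^2 + 2|z|/n^2)^{-1/4} (m|v|^2 + 2|z|/n^2)^{-1/4}}{(l_1+l_2)|u|^2 + 2 l_2 (u, v) + (l_1+l_2)|v|^2 + 2|z|/n^2}.$$
Since the non-zero spectra of $\Phi B(|z|) \Phi^*$ and of $\Phi^* \Phi\, B(|z|) = \bigl(\sum v_1^2(p')\bigr) B(|z|)$ coincide with multiplicity, one obtains
$$n(\gamma, T_{11}(\delta; |z|)) = n\bigl(\gamma/\alpha,\, B(|z|)\bigr), \qquad \alpha := D \sum_{p' \in \Lambda_0} v_1^2(p') = (l_1+l_2)^{3/2}/\pi^2,$$
and the dependence on the values $v_1(p')$ and on the cardinality of $\Lambda_0$ drops out, which is the source of the $\Lambda$-independence of ${\mathcal U}_0$ in Theorem $\ref{THM 2.3}$.

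Next I would analyze $B(|z|)$ via three successive unitary transformations: passage to spherical coordinates $u = r\xi$, $v = r'\eta$, absorbing the Jacobian by $f \mapsto r f$; the isotropic rescaling $r = \beta s$ with $\beta := \sqrt{2|z|/n^2}/\sqrt{m}$, which normalizes $m|u|^2 + 2|z|/n^2$ to $(2|z|/n^2)(s^2 + 1)$ and scales the denominator uniformly, leaving the $|z|$-dependence only through the new range $s \in (0, \delta/\beta)$; and finally the Vainberg substitution $s = \sinh x$ via the unitary $f \mapsto \sqrt{\cosh x}\, f(\sinh x)$. A direct calculation then shows that $B(|z|)$ becomes unitarily equivalent to an integral operator on $L_2((0, {\bf r}(|z|)) \times {\Bbb S}^2)$ with ${\bf r}(|z|) = {\rm arcsinh}(\delta/\beta) = \tfrac{1}{2}|\log|z|| + O(1)$, whose kernel, after using the identities $\sinh^2 x + \sinh^2 x' = \cosh(x+x')\cosh(x-x') - 1$ and $2\sinh x \sinh x' = \cosh(x+x') - \cosh(x-x')$, converges pointwise as $x + x' \to \infty$ to a scalar multiple $c$ of the Sobolev kernel $S(x - x',\, \langle \xi, \eta \rangle)$.

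The remaining pieces, namely the low-$s$ region (equivalently $|u|, |v|$ of order $\sqrt{|z|}$), the cut-off near $s \sim \delta/\beta$, the discrepancy between ${\bf r}(|z|)$ and $\tfrac{1}{2}|\log|z||$, and the finite-$(x+x')$ departure of the kernel from its Sobolev limit, are each to be estimated and shown to define a Hilbert--Schmidt operator uniformly in $z$ and continuous up to $z = 0$. In combination with $\eqref{U gamma}$, equivalently $n(\gamma, S_{{\bf r}(|z|)})/|\log|z|| \to U(\gamma)$ since $2{\bf r} = |\log|z||$, Lemma $\ref{LEM 4.4}$ applied with $f(z) = 1/|\log|z||$ then yields $\lim n(\gamma, B(|z|))/|\log|z|| = U(\gamma/c)$ for the appropriate scaling constant $c$ produced by the three substitutions, and substituting into $n(\gamma, T_{11}(\delta; |z|)) = n(\gamma/\alpha, B(|z|))$ gives $\eqref{4.2}$. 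The main technical obstacle will be the precise bookkeeping of the scalar constants through the three unitary changes of variables, together with the verification that each of the error kernels above indeed gives a Hilbert--Schmidt operator bounded uniformly in $z$ and continuous up to $z = 0$, so that the hypotheses of Lemma $\ref{LEM 4.4}$ are met and the constants $c$ and $\alpha$ combine to reproduce the normalization $U(\gamma)$.
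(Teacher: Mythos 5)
Your plan is correct and follows essentially the same route as the paper's proof: both strip out the $v_1(p')v_1(q')$ rank structure with the identity $n(\gamma,B_1B_2)=n(\gamma,B_2B_1)$ so as to reduce $T_{11}(\delta;|z|)$ to a scalar multiple of a single translation-invariant operator (your interlacing map $\Phi$ is just a tidier packaging of the paper's block factorization ${\bf A}_r{\bf E}$ versus ${\bf E}{\bf A}_r$), then rescale by $\sqrt{|z|}$ and pass to a logarithmic/hyperbolic radial variable to land on $S_{\bf r}$ modulo Hilbert--Schmidt errors continuous up to $z=0$, and finally invoke Lemma \ref{LEM 4.4} together with \eqref{U gamma}. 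The only deviations are cosmetic (the $\sinh$ substitution in place of the paper's unit-ball cutoff followed by the dilation $(Mf)(x,w)=e^{3x/2}f(e^xw)$), and the constant bookkeeping you defer is exactly the part the paper also leaves implicit, so there is no gap beyond what the published argument itself omits.
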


\begin{proof} The subspace of functions $\psi,$ supported by the
set $\bigcup\limits_{q'\in \Lambda_0} U_\delta(q')$ is invariant
with respect to the operator $T_{11}(\delta; |z|).$ Let
$T_{11}^0(\delta; |z|)$ be the restriction of the integral
operator $T_{11}(\delta; |z|)$ to the subspace
$L_2(\bigcup\limits_{q'\in \Lambda_0} U_\delta(q')),$ that is, the
integral operator in $L_2(\bigcup\limits_{q'\in \Lambda_0}
U_\delta(q'))$ with the kernel $T_{11}^0(\delta; |z|;
\cdot,\cdot)$ defined on $\bigcup\limits_{p'\in \Lambda_0}
U_\delta(p') \times \bigcup\limits_{q'\in \Lambda_0} U_\delta(q')$
as
\begin{align*}
T_{11}^0(\delta; |z|; p,q):=\frac{D (m |p-p'|^2+
2|z|/(n^2))^{-\frac{1}{4}}(m |q-q'|^2+2|z|/(n^2))^{-\frac{1}{4}}}
{(l_1+l_2)|p-p'|^2+2l_2(p-p', q-q')+(l_1+l_2)|q-q'|^2+2|z|/(n^2)},
\end{align*}
$(p,q) \in U_\delta(p') \times U_\delta(q')$ for $p',q' \in
\Lambda_0.$

In the rest part of the proof we denote by $n_0$ the number of
points of $\Lambda_0$ and for convenience we numerate the points
of $\Lambda_0$ as $p_1, \ldots, p_{n_0}$ and set
$\overline{1,n_0}=1,\ldots,n_0.$

Since $L_2(\bigcup\limits_{q'\in \Lambda_0} U_\delta(q')) \cong
\bigoplus\limits_{q' \in \Lambda_0} L_2(U_\delta(q')),$ we can
express the integral operator $T_{11}^0(\delta; |z|)$ as the $n_0
\times n_0$ block operator matrix ${\bf T}_0(\delta; |z|)$ acting
on $\bigoplus\limits_{i=1}^{n_0} L_2(U_\delta(p_{i}))$ as
$$
{\bf T}_0(\delta; |z|):=\left( \begin{array}{ccc}
T_0^{(1,1)}(\delta; |z|) & \ldots & T_0^{(1,n_0)}(\delta; |z|)\\
\vdots & \ddots & \vdots\\
T_0^{(n_0,1)}(\delta; |z|) & \ldots & T_0^{(n_0,n_0)}(\delta; |z|)\\
\end{array}
\right),
$$
where for $i,j=\overline{1, n_0}$ the operator
$T_0^{(i,j)}(\delta; |z|): L_2(U_\delta(p_{j})) \to
L_2(U_\delta(p_{i}))$ is the integral operator with the kernel
$T_0(\delta; |z|; p,q),$ $(p,q) \in U_\delta(p_{i}) \times
U_\delta(p_{j}).$

Set
$$
L_2^{(n_0)}(U_r({\bf 0})):=\{\phi=(\phi_1, \cdots, \phi_{n_0}):\,
\phi_i \in L_2(U_r({\bf 0})),\, i=\overline{1, n_0} \}.
$$

It is easy to show that ${\bf T}_0(\delta; |z|)$ is unitarily
equivalent to the $n_0 \times n_0$ block operator matrix ${\bf
T}_1(r),$ $r=|z|^{-\frac{1}{2}},$ acting on $ L_2^{(n_0)}(U_r({\bf
0}))$ as
$$
{\bf T}_1(r):=\left( \begin{array}{ccc}
v_1(p_1) v_1(p_1) T_1(r) & \ldots & v_1(p_1) v_1(p_{n_0}) T_{1}(r)\\
\vdots & \ddots & \vdots\\
v_1(p_{n_0}) v_1(p_1) T_{1}(r) & \ldots & v_1(p_{n_0})v_1(p_{n_0}) T_{1}(r)\\
\end{array}
\right),
$$
where $T_{1}(r)$ is the integral operator on $L_2(U_r({\bf 0}))$
with the kernel
$$
\frac{D (m |p|^2+ 2/(n^2))^{-\frac{1}{4}}(m
|q|^2+2/(n^2))^{-\frac{1}{4}}} {(l_1+l_2)|p|^2+2l_2(p,
q)+(l_1+l_2)|q|^2+2/(n^2)}.
$$

The equivalence is realized by the unitary dilation ($n_0 \times
n_0$ diagonal matrix)
$$
{\bf B}_r:={\rm diag} \{ B_r^{(1)}, \ldots, B_r^{(n_0)}\}:
\bigoplus\limits_{i=1}^{n_0} L_2(U_\delta(p_{i})) \to
L_2^{(n_0)}(U_r({\bf 0})),
$$
Here for $i=\overline{1, n_0}$ the operator $B_r^{(i)}:
L_2(U_\delta(p_i)) \to L_2(U_r({\bf 0}))$ acts as
$$
(B_r^{(i)}f)(p)=\left( r/\delta \right)^{-3/2} f(\delta p/r+p_i).
$$

Let ${\bf A}_r$ and ${\bf E}$ be the $n_0 \times 1$ and $1 \times
n_0$ matrices of the form
$$
{\bf A}_r:=\left( \begin{array}{ccc}
v_1(p_1) T_{1}(r)\\
\vdots\\
v_1(p_{n_0}) T_{1}(r)\\
\end{array}
\right), \quad {\bf E}:=(v_1(p_1) I \ldots v_1(p_{n_0}) I),
$$
respectively, where $I$ is the identity operator on $L_2(U_r({\bf
0})).$

It is well known that if $B_1, B_2$ are bounded operators and
$\gamma \neq 0$ is an eigenvalue of $B_1B_2,$ then $\gamma$ is an
eigenvalue for $B_2B_1$ as well of the same algebraic and
geometric multiplicities (see {\it e.g.} \cite{Hal}). Therefore,
$n(\gamma, {\bf A}_r {\bf E})=n(\gamma, {\bf E} {\bf A}_r),$
$\gamma>0.$ Direct calculation shows that ${\bf T}_1(r)={\bf A}_r
{\bf E}$ and
$$
{\bf E}{\bf A}_r= T_1^0(r):=\Bigl( \sum\limits_{i=1}^{n_0}
v_1^2(p_i) \Bigr) T_1(r).
$$
So, for $\gamma>0$ we have $n(\gamma, {\bf T}_1(r))=n(\gamma,
T_1^0(r)).$

Furthermore, replacing
$$
(m|p|^2+2/(n^2))^{\frac{1}{4}}, \quad
(m|q|^2+2/(n^2))^{\frac{1}{4}} \quad \mbox{and} \quad
(l_1+l_2)|p|^2+2l_2(p, q)+(l_1+l_2)|q|^2+2/(n^2)
$$
by the expressions
$$
(m|p|^2)^{\frac{1}{4}}(1-\chi_1(p))^{-1}, \quad
(m|q|^2)^{\frac{1}{4}}(1-\chi_1(q))^{-1}\quad \mbox{and}\quad
(l_1+l_2)|p|^2+2l_2(p, q)+(l_1+l_2)|q|^2,
$$
respectively, we obtain the integral operator $T_2(r).$ The error
$T_1^0(r)-T_2(r)$ is a Hilbert-Schmidt operator and continuous up
to $z=0.$

Using the dilation
$$
M: L_2(U_r({\bf 0})\setminus U_1({\bf 0})) \to L_2((0, {\bf r}),
{\bf \sigma}), \quad (Mf)(x,w)=e^{3x/2}f(e^x w),
$$
where ${\bf r} = 1/2 |\log |z||,$ $x \in (0, {\bf r}),$ $w \in
{\Bbb S}^2,$ one sees that the operator $T_2(r)$ is unitarily
equivalent to the integral operator $S_{\bf r}.$

Since the difference of the operators $S_{\bf r}$ and
$T_{11}(\delta; |z|)$ is compact (up to unitary equivalence) and
hence, since ${\bf r} = 1/2 |\log |z||,$ we obtain the equality
\begin{equation*}
\lim\limits_{|z| \to 0} \frac{n(\gamma, T_{11}(\delta;
|z|))}{|\log |z||}=\lim\limits_{{\bf r} \to 0} \frac{1}{2} {\bf
r}^{-1} n(\gamma, S_{\bf r}), \quad \gamma>0.
\end{equation*}
Now Lemma \ref{LEM 4.4} and the equality \eqref{U gamma} complete
the proof of Theorem \ref{THM 4.6}.
\end{proof}

We are now ready for the

\begin{proof}[Proof of Theorem $\ref{THM 2.3}$] Let the operator ${\bf h}({\bf 0})$
have a zero energy resonance and $K \in \Lambda.$ Using Lemmas
\ref{LEM 4.4}, \ref{LEM 4.5} and Theorem \ref{THM 4.6} we have
that
$$
\lim\limits_{|z| \to 0} \frac{n(1, {\bf T}(K,z))}{|\log |z|| }=
U(1).
$$
Taking into account the last equality and Lemma \ref{LEM 4.3}, and
setting ${\mathcal U}_0=U(1),$ we complete the proof of Theorem
\ref{THM 2.3}.
\end{proof}

{\bf Acknowledgements.} This work was supported in part by the Malaysian
Ministry of Education through the Research Management Centre (RMC),
Universiti Tekhnology Malaysia (PAS, Ref. No. PY/2014/04068, Vote:
QJ130000.2726.01K82).

\end{document}